\documentclass[journal]{IEEEtran}
\usepackage{cite}
\usepackage{amsmath,amssymb,amsfonts,amsthm}

\usepackage[utf8]{inputenc} 
\usepackage[T1]{fontenc}

\usepackage{graphicx}
\usepackage{textcomp}
\usepackage{xcolor,soul}
\usepackage{algpseudocode}
\usepackage{amsmath}
\usepackage{mathrsfs}
\usepackage{bm}
\usepackage[abs]{overpic}
\usepackage[nocomma]{optidef}

\usepackage{physics}
\usepackage{tikz}
\usepackage{mathdots}
\usepackage{yhmath}
\usepackage{cancel}
\usepackage{siunitx}
\usepackage{array}
\usepackage{multirow}
\usepackage{amssymb}
\usepackage{gensymb}
\usepackage{tabularx}
\usepackage{extarrows}
\usepackage{booktabs}
\usepackage[ruled,linesnumbered]{algorithm2e}
\makeatletter
\newcommand{\mathleft}{\@fleqntrue\@mathmargin0pt}
\newcommand{\mathcenter}{\@fleqnfalse}
\makeatother

\usetikzlibrary{fadings}
\usetikzlibrary{patterns}
\usetikzlibrary{shadows.blur}
\usetikzlibrary{shapes}

\theoremstyle{plain}
\newtheorem*{theorem*}{Theorem}
\newtheorem{theorem}{Theorem}

\newtheorem{remark}{Remark}
\newtheorem*{eg}{Example}

\usepackage{xcolor,varwidth}

\usepackage[scaled=.75]{beramono}
\newcommand{\bpara}[1]		{\medskip \noindent {\bf #1}}

\renewcommand\geq\geqslant
\renewcommand\leq\leqslant

\newcommand\fig[1]			{Fig.~\ref{#1}}

\def\eg					    {\emph{e.g.,}~}
\def\ie					    {\emph{i.e.,}~}

\def\C{\mathbb{C}}

\def\bW{\mathbf{W}}

\def\bU{\mathbf{U}}

\def\bLam{\boldsymbol{\Lambda}}

\def\Re{\mathcal{R}}

\def\bF{\mathbf{F}}

\def\bthe{\boldsymbol{\Theta}}

\def\bY{\mathbf{Y}}

\usepackage{xspace}

\DeclareDocumentCommand{\sd}{o}  
{{\underline{\ast}\IfValueT{#1}{_{#1}}}}

\DeclareDocumentCommand{\xmod}{m o o o}  
{%
\IfNoValueTF{#4}
{{#1}\IfValueT{#2}{_{m,\mathsf{#2}}}\IfValueT{#3}{#3}}
{{#1}\IfValueT{#2}{_{m,\mathsf{#2}}^{\mathsf{#4}}}\IfValueT{#3}{#3}}
}

\renewcommand{\Re}{\mathfrak{R}}

\usepackage[inline]{enumitem}
\DeclareMathAlphabet{\mathsfit}{T1}{\sfdefault}{\mddefault}{\sldefault}
\SetMathAlphabet{\mathsfit}{bold}{T1}{\sfdefault}{\bfdefault}{\sldefault}
\usepackage[switch]{lineno}
\usepackage{siunitx}

\def\BibTeX{{\mathrm B\kern-.05em{\sc i\kern-.025em b}\kern-.08em
    T\kern-.1667em\lower.7ex\hbox{E}\kern-.125emX}}
    
\usepackage{hyperref}
\usepackage{etoolbox}

\begin{document}

\title{Microwave Linear Analog Computer (MiLAC)-Aided MIMO Radar Sensing: Transmit Beamforming Design and DoA Estimation}

\author{
Ziang Liu, Zheyu Wu,
        and~Bruno Clerckx,~\IEEEmembership{Fellow,~IEEE}
\thanks{Z. Liu, Z. Wu, and B. Clerckx are with the Communications \& Signal Processing (CSP) Group at the Dept. of Electrical and Electronic Engg., Imperial College London, SW7 2AZ, UK. (e-mails:\{ziang.liu20\}@imperial.ac.uk).
}}


\maketitle

\begin{abstract}
Multiple-input multiple-output (MIMO) radar has waveform diversity and large spatial degrees of freedom (DoFs), making it attractive for high-resolution sensing. Scaling MIMO radar to massive arrays can further improve sensing performance, but it also increases hardware cost, power consumption, and digital processing complexity. The microwave linear analog computer (MiLAC) can tackle these challenges by moving linear operations from the digital domain to the analog domain. {MiLAC has shown promising benefits for communications in recent studies and this paper identifies its potential for radar sensing.} Specifically, we consider both MiLAC-aided transmit beamforming and receiver-side two-dimensional discrete Fourier transform (2D-DFT)-based direction-of-arrival (DoA) estimation. For transmit beamforming, we formulate a weighted Cram\'{e}r--Rao bound (CRB) minimization problem under lossless and reciprocal MiLAC constraints and propose a penalty dual decomposition (PDD)-based iterative algorithm to address the non-convex problem. We further prove that MiLAC-aided and fully-digital beamforming achieve the same CRB. 
For receiver processing, we show that the 2D DFT can be implemented by a lossless reciprocal MiLAC, which enables analog-domain DoA estimation without digital optimization. 
Numerical results confirm the theoretical finding and show that the MiLAC-aided approach achieves the same CRB and DoA estimation performance as the fully-digital benchmark. {Meanwhile, hardware cost and power consumption are reduced because only low-resolution DACs are required at the transmitter, while RF chains and ADCs are eliminated at the receiver. Moreover, performing the 2D DFT in the analog domain eliminates all digital DFT operations for DoA estimation.
}
\end{abstract}

\begin{IEEEkeywords}
Analog computation, Cram\'{e}r--Rao bound (CRB), direction-of-arrival (DoA) estimation, Microwave linear analog computer (MiLAC), MIMO radar.
\end{IEEEkeywords}

\section{Introduction}
Sensing is expected to be a core functionality of future wireless systems \cite{gonzalez2024integrated}. Among sensing technologies, multiple-input multiple-output (MIMO) radar is particularly attractive because multiple antennas and multiple orthogonal waveforms provide large spatial degrees of freedom (DoFs) and waveform diversity compared with phased-array radar \cite{li2007mimo, stoica2007probing}. These DoFs bring gains in beampattern design, target detection, parameter estimation \cite{stoica2007probing,fuhrmann2008transmit, hassanien2012moving}, and interference suppression \cite{tang2015relative}. As a result, MIMO radar has been adopted in applications such as automotive radar \cite{sun2020mimo}, remote sensing \cite{tarchi2012mimo}, and integrated sensing and communication \cite{liu2023joint}. Extending MIMO radar to massive arrays is a natural step toward higher spatial resolution and larger array gain \cite{bjornson2019massive, fortunati2020massive}, and it has motivated substantial recent interest in mm-wave \cite{li2021fast} and terahertz (THz) \cite{elbir2021terahertz} massive MIMO radar systems.

However, this scaling creates two bottlenecks. The first is \textit{hardware cost} and \textit{power consumption}. Fully exploiting the DoFs of a MIMO radar requires one RF chain per antenna, including DACs and ADCs, which dominate the overall power consumption \cite{zhang2016spectral}. For example, a 256-antenna array using 12-bit ADCs at 20 Gsamples/s can consume 256 W \cite{zhang2018low}, which is difficult to sustain in practical deployments. The second bottleneck is \textit{computational complexity}. As the array size increases, the computational complexity of digital transmit beamforming and receiver-side parameter estimation grows substantially. Even discrete Fourier transform (DFT)-based digital processing incurs non-negligible complexity. In addition, subspace-based methods such as multiple signal classification (MUSIC) 
additionally requires eigenvalue decompositions and matrix inversions, resulting in complexity at the order of $\mathcal{O}(N^3)$ \cite{schmidt1986multiple}. For latency-sensitive applications such as automotive radar, this digital processing burden can become prohibitive.

Existing approaches typically address only part of this problem. To reduce hardware cost and power consumption, low-resolution DACs/ADCs architectures have been studied, because the power consumption of DACs/ADCs scales exponentially with quantization resolution \cite{zhang2016spectral}. At the extreme, 1-bit DACs/ADCs substantially simplify the transceiver, and have been considered for MIMO radar waveform design, detection, parameter estimation, and tracking \cite{cheng2019transmit, xiao2022one, xi2020gridless, stein2015asymptotic}. While such designs can be attractive, the information loss introduced by coarse quantization is irreversible and can significantly degrade sensing performance. Hybrid analog-digital beamforming is another route by reducing the number of RF chains, but the resulting reduction in waveform diversity and beamforming flexibility causes performance loss compared to fully-digital architectures \cite{shu2018low}. Another alternative is to move beamforming entirely to the analog domain, thus allowing the RF chains to directly carry radar-modulated signals, \eg frequency-modulated continuous-wave (FMCW) waveforms. Therefore, only low-resolution DACs/ADCs are required, which can reduce hardware cost and power consumption.
Stacked intelligent metasurfaces (SIMs), which are realized by cascaded transmissive RIS layers, are a representative example of this direction \cite{an2024stacked}. Specifically, the 2D DFT-based DoA estimation is achieved by SIMs and directly in the analog domain \cite{an2024stacked}.

Analog-domain processing is also advantageous from a computational complexity perspective. Over-the-air computation (OAC) exploits the property of electromagnetic (EM) waveform superposition to realize efficient computation during transmission. In addition, this method has been employed for data aggregation in integrated sensing and communication systems \cite{li2023over, li2023integrated}. However, its supported operations are mainly limited to nomographic functions such as weighted sums and arithmetic means \cite{li2023over}. For more general linear transforms, SIM-aided analog computing has recently been proposed for tasks such as transmit beampattern design and DFT realization \cite{niu2024stacked, li2024transmit, an2024two}. 
However, analog domain processing typically decreases the flexibility and thus leads to performance loss compared to digital processing due to the operations are on the microwave impedance network \cite{nerini2025analog1}. 

\begin{figure}[t]
    \centering
    \includegraphics[width = 0.48\textwidth]{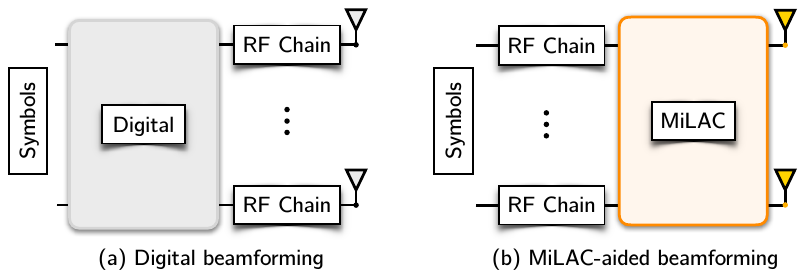}
    \centering
    \caption{Comparison between (a) digital beamforming and (b) MiLAC-aided analog beamforming.}
    \label{fig:digital_vs_milac}
\end{figure}

{So far a unified framework of analog computation is missing, but microwave linear analog computer (MiLAC) provides a first promising attempt \cite{nerini2025analog1}.} Built on multiport microwave networks with reconfigurable impedance components, MiLAC implements linear transformations directly in the analog domain (cf.\ \fig{fig:digital_vs_milac}) by tuning the impedance network \cite{nerini2025analog1, nerini2025analog2}. Recent work has shown that MiLAC supports purely analog linear zero-forcing beamforming (ZFBF) and minimum mean square error (LMMSE)-based transmit and receive beamforming for MIMO communications \cite{nerini2025analog1, nerini2025analog2}. {A physics-compliant model considering mutual coupling is investigated in \cite{nerini2026physics}, showing that mutual coupling is on average beneficial in MiLAC-aided systems. To alleviate the high circuit complexity and control overhead, \cite{nerini2026mimo} proposes a one-layer stem-connected MiLAC architecture that preserves capacity-achieving performance in single-user MIMO systems while significantly reducing hardware complexity. Analog-domain channel estimation is proposed in \cite{zhang2026channel}, achieving the same performance as its digital-domain counterpart while reducing computational complexity and hardware requirements.}


From a hardware perspective, losslessness and reciprocity are common constraints for microwave impedance networks. Under these constraints, MiLAC-aided beamforming can match the capacity of fully-digital beamforming in point-to-point MIMO communication systems \cite{nerini2025capacity}. For multi-user communications, \cite{wu2026microwave, fang2026performance} further show that MiLAC does not in general retain the full flexibility of digital beamforming, but a hybrid digital-MiLAC architecture can achieve maximum performance with only $K$ RF chains for $K$ users, whereas conventional hybrid beamforming requires $2K$ \cite{sohrabi2016hybrid}. 
More recently, \cite{zhou2026two} shows that a two-layer reciprocal and lossless MiLAC achieves the same performance as fully-digital beamforming in multi-user MISO systems, and \cite{peng2026hybrid} studies hybrid digital-MiLAC beamforming for wideband multi-user (MU) MIMO systems, demonstrating that the proposed design consistently outperforms conventional hybrid digital-analog beamforming.
To summarize, MiLAC-aided communications has shown the following promising benefits:
\textit{1) MiLAC offers the same flexibility as digital beamforming, and hence achieves maximum performance.}
\textit{2) MiLAC requires only as many RF chains as the number of transmitted symbols, \ie data streams.}
Since the signals carried and sampled at the RF chains are the actual symbols, \textit{3) low-resolution ADCs and DACs are sufficient.}
\textit{4) MiLAC does not require any computation at each symbol time}, since the symbols are precoded and combined fully in the analog domain.
\textit{5) MiLAC can perform ZFBF and MMSE combining with significantly reduced computational complexity, \ie $\mathcal{O}(N^2)$ instead of $\mathcal{O}(N^3)$.}

However, these results focus on communication systems. The use of MiLAC for MIMO radar sensing, especially under practical lossless and reciprocal constraints, remains unexplored. Recent prototype-level evidence shows that hybrid couplers and phase shifters can realize linear transforms, \ie a $4 \times 4$ DFT \cite{nerini2026analog}, further motivating the study of MiLAC for radar sensing. {In radar systems, scaling to massive arrays increases not only hardware cost and power consumption due to high-resolution DACs and RF chains, but also computational complexity due to large-dimensional DFT operations for DoA estimation.} To address these challenges, this paper investigates MiLAC-aided MIMO radar for the first time from both the transmitter and receiver sides. Under lossless and reciprocal constraints, we show that MiLAC-aided radar sensing has the following benefits.
\textit{1) MiLAC achieves the same sensing performance as fully-digital beamforming}, since the set of achievable transmit covariance matrices under MiLAC constraints is identical to that of fully-digital beamforming, and the Cram\'{e}r--Rao bound (CRB) depends on the beamforming matrix only through the transmit covariance (See Section \ref{subsec:crb_equiv}). In addition, the 2D DFT-based DoA estimation can be exactly implemented by a lossless reciprocal MiLAC (See Section \ref{sec:receiver}).
\textit{2) MiLAC reduces hardware cost and power consumption}, because MiLAC enables fully analog beamforming and thus the RF chains can directly carry radar-modulated signals with lower-resolution DACs/ADCs. 
\textit{3) MiLAC requires no per-symbol digital processing}, as the transmit beamforming in MIMO radar is entirely implemented in the analog domain. 
\textit{4) DFT-based DoA estimation can be achieved with minimum computational complexity}, because the DFT can be exactly implemented by a lossless reciprocal MiLAC without any offline optimization. 
\textit{5) Ultra-low latency especially in massive MIMO radar} due to that the DFT is performed in the analog domain. This is particularly beneficial for real-time detection and tracking applications such as automotive radar.
These advantages make MiLAC especially appealing for massive array and real-time radar applications.

\bpara{Contributions and Overview of Results.} The main contributions of this paper are summarized as follows:
\begin{enumerate}
    \item This is the first work to investigate MiLAC-aided MIMO radar sensing under practical lossless and reciprocal constraints. Specifically, we formulate transmit beamforming for MiLAC-aided MIMO radar as a weighted CRB minimization problem for multi-target 2D angle estimation, under both the total transmit power constraint and the lossless reciprocal constraints imposed by the MiLAC impedance network. On the receiver side, we show that the 2D DFT can be implemented by a lossless reciprocal MiLAC, which enables analog-domain DoA estimation without offline digital optimization.

    \item We prove that MiLAC-aided and fully-digital beamforming achieve the same optimal CRB. The key insight is that the CRB depends on the beamforming matrix  only through the transmit covariance, and we show that the set of achievable transmit covariance matrices is identical under both architectures. 

    \item We develop a penalty dual decomposition (PDD)-based iterative algorithm \cite{shi2020penalty} to solve the non-convex transmit beamforming problem. The algorithm alternately updates the beamforming matrix, power allocation vector, and auxiliary variable until convergence.

    \item Numerical results confirm that the proposed MiLAC-aided beamforming design achieves the same CRB as fully-digital beamforming and can probe the beams at the target angles. On the receiver side, MiLAC-aided 2D-DFT DoA estimation achieves the same performance as fully-digital DFT-based DoA estimation. Furthermore, the receiver-side computational complexity of 2D-DFT-based DoA estimation is significantly reduced by MiLAC, especially for large arrays. These gains in sensing performance are obtained with lower hardware cost and computational complexity compared to fully-digital approaches.
    
    \item {Overall, under lossless and reciprocal constraints, we show that 1) MiLAC achieves the same sensing performance as fully-digital beamforming, 2) MiLAC reduces hardware cost and power consumption, 3) MiLAC requires no per-symbol digital processing, 4) DFT-based DoA estimation can be achieved with minimum computational complexity, and 5) Ultra-low latency especially in massive MIMO radar.}
\end{enumerate}

\bpara{Organization of This Paper.} Section \ref{sec:sys} presents the sensing system model, the MiLAC model, and the MiLAC-aided signal processing at the transmitter and receiver. Section \ref{sec:problem_formulation} explains the transmit-side beamforming design, including the multi-target CRB derivation, the weighted CRB minimization problem, and the proposed algorithm with its convergence and complexity analysis. Section \ref{sec:receiver} addresses receiver-side MiLAC-aided 2D-DFT-based DoA estimation. Numerical results are given in Section \ref{sec:simu}, and conclusions are drawn in Section \ref{sec:con}.

\bpara{Notation.} The sets of integers, real numbers, and complex numbers are denoted by  $\mathbb{Z}$, $\mathbb{R}$, and $\mathbb{C}$, respectively. Matrices, vectors, and scalars are represented by bold uppercase, bold lowercase, and regular fonts, respectively. The real part of a complex number is denoted by $\Re(\cdot)$. For $x \in \mathbb{R}$, we define the ceiling operations by  $\lceil x\rceil=\inf \{k \in \mathbb{Z}: k \geqslant x\}$.
For a matrix $\mathbf{X}$, its transpose, conjugate-transpose, and inverse are denoted by $\mathbf{X}^\top$, $\mathbf{X}^H$, and $\mathbf{X}^{-1}$, respectively. The element in the $i^{\rm{th}}$ row and $j^{\rm{th}}$ column of the matrix $\mathbf{X}$ is denoted by $[\mathbf{X}]_{i,j}$. The identity matrix and all-zero matrix are represented by $\mathbf{I}$ and $\mathbf{0}$. 
The diagonal matrix, and trace operation are denoted by $\operatorname{diag}{(\cdot)}$ and $\trace(\cdot)$, respectively. The absolute value, statistical expectation, Kronecker product and Hadamard (element-wise) product are denoted by $|\cdot|$, $\mathbb{E}{(\cdot)}$, $\otimes$ and $\odot$, respectively.

\begin{figure}[t]
    \centering
    \includegraphics[width = 0.49\textwidth]{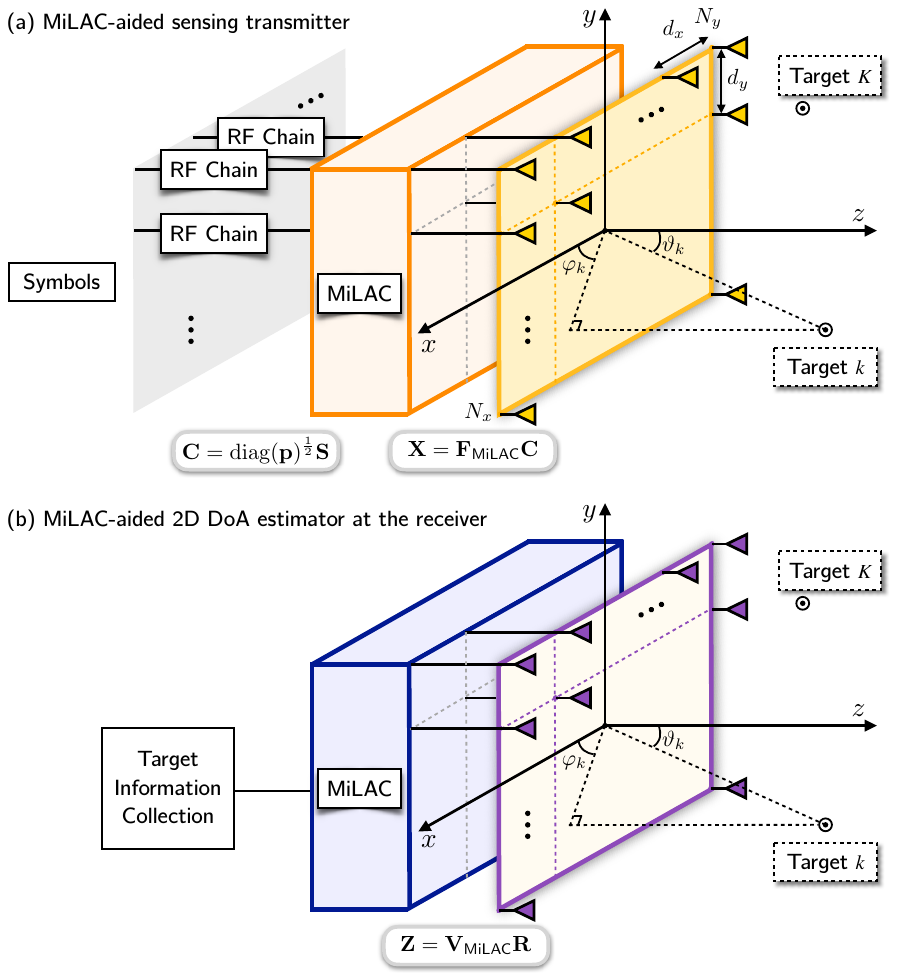}
    \centering
    \caption{Architecture of the MiLAC-aided sensing system.}
    \label{fig:array}
\end{figure}
\section{System Model}
\label{sec:sys}
\subsection{Sensing System Model}
As shown in \fig{fig:array}, the co-located transmit and receive antenna uniform planar arrays (UPA) are assumed to have $N = N_x N_y$ antennas, where $N_x$ and $N_y$ are the number of elements in $x$-direction and $y$-direction, respectively. The element spacings are $d_x$ and $d_y$. Let $K$ denote the total number of targets. For the $k^{\mathrm{th}}$ target (where $k = 1, \dots, K$), $\varphi_k \in[0,2 \pi)$ and $\vartheta_k \in[0, \pi / 2]$ denote the azimuth angle and elevation angle of the signal relative to the antenna arrays. Thus, the electrical angles $\psi_{x,k}$ and $\psi_{y,k}$ in the $x$- and $y$-directions for the $k^{\mathrm{th}}$ target are given by
\begin{equation}
\label{eq:psi_xy}
    \psi_{x,k}=\kappa d_x \sin (\vartheta_k) \cos (\varphi_k), \quad 
    \psi_{y,k}=\kappa d_y \sin (\vartheta_k) \sin (\varphi_k),
\end{equation}
where $\kappa = 2 \pi/\lambda$ denotes the wavenumber and $\lambda$ is the wavelength.

The steering vectors with respect to (w.r.t) the transmit and receive antenna arrays for the $k^{\mathrm{th}}$ target are represented by $\boldsymbol{a}(\psi_{x,k}, \psi_{y,k}) \in \mathbb{C}^{N \times 1}$ and $\boldsymbol{b}(\psi_{x,k}, \psi_{y,k}) \in \mathbb{C}^{N \times 1}$, respectively \cite{heidenreich2012joint}, where
\begin{equation}
\boldsymbol{a}(\psi_{x,k}, \psi_{y,k}) \triangleq \boldsymbol{a}_y(\psi_{y,k}) \otimes \boldsymbol{a}_x(\psi_{x,k})
\end{equation}
\begin{equation}
\boldsymbol{b}(\psi_{x,k}, \psi_{y,k}) \triangleq  \boldsymbol{a}_y(\psi_{y,k}) \otimes \boldsymbol{a}_x(\psi_{x,k}).
\end{equation}
The elements of the vectors $\boldsymbol{a}_x(\psi_{x,k}) \in \mathbb{C}^{N_x \times 1}$ and $\boldsymbol{a}_y(\psi_{y,k}) \in \mathbb{C}^{N_y \times 1}$ are defined as follows:
\begin{equation}
[\boldsymbol{a}_x(\psi_{x,k})]_{n_x} \triangleq e^{j \psi_{x,k}(n_x-1)}, \quad n_x=1, \dots, N_x,
\end{equation}
\begin{equation}
[\boldsymbol{a}_y(\psi_{y,k})]_{n_y} \triangleq e^{j \psi_{y,k}(n_y-1)}, \quad n_y=1, \dots, N_y.
\end{equation}
The $n^{\rm{th}}$ antenna element has the coordinates $\left(n_x, n_y\right)$, where $n_x$ and $n_y$ are the indices of the antenna element in the $x$- and $y$-directions, respectively, given by
\begin{equation}
\label{eq:nxny}
    n_{\mathrm{y}} \triangleq\left\lceil n / N_{\mathrm{x}}\right\rceil, \quad
    n_{\mathrm{x}} \triangleq n-\left(n_{\mathrm{y}}-1\right) N_{\mathrm{x}}.
\end{equation}

Let $\mathbf{A}_k(\psi_{x,k}, \psi_{y,k}) \triangleq \boldsymbol{b}(\psi_{x,k}, \psi_{y,k}) \boldsymbol{a}^\top(\psi_{x,k}, \psi_{y,k})$ denote the round-trip steering matrix of target $k$. The received signal $\mathbf{R} \in \mathbb{C}^{N \times L}$ at the receive UPA, after reflection from the $K$ targets, can be expressed as
\begin{equation}
\label{eq:rx_signal}
\begin{aligned} \mathbf{R} 
& \! = \! \! \sum_{k=1}^{K} \! \alpha_{r,k} \boldsymbol{b}(\psi_{x,k}, \! \psi_{y,k}) \boldsymbol{a}\!^\top \! (\psi_{x,k}, \! \psi_{y,k}) \mathbf{F}_\mathsf{MiLAC} \operatorname{diag}(\mathbf{p})\!^\frac{1}{2} \mathbf{S} \! + \! \mathbf{N}\\
& = \! \! \sum_{k=1}^{K} \alpha_{r,k} \mathbf{A}_k(\psi_{x,k}, \psi_{y,k})\mathbf{F}_\mathsf{MiLAC} \mathbf{C} + \mathbf{N},
\end{aligned}
\end{equation}
where $\alpha_{r,k}$ denotes the reflection coefficient of the $k^{\mathrm{th}}$ target, which captures both the round-trip path loss and the radar cross-section (RCS). Let $M$ denote the number of waveforms and $L$ the snapshot length. $\mathbf{F}_\mathsf{MiLAC} \in \mathbb{C}^{N \times M}$ is the input-output transfer matrix of the MiLAC, as detailed in Section \ref{subsec:milac_model}. The transmitted signal is $\mathbf{X} = \mathbf{F}_\mathsf{MiLAC} \mathbf{C}$, where $\mathbf{C} = \operatorname{diag}(\mathbf{p})^\frac{1}{2} \mathbf{S} \in \mathbb{C}^{M \times L}$ is the source signal at the RF chains. Here, $\mathbf{p} \in \mathbb{R}^{M \times 1}$ with $p_m \geq 0$ for $m=1,\ldots,M$ denotes the power-allocation vector for the $M$ waveforms, and {$\mathbf{S} \in \mathbb{C}^{M \times L}$ is the known deterministic transmitted symbol matrix satisfying $\mathbf{S} \mathbf{S}^H = L\mathbf{I}$ \cite{bekkerman2006target}. Specifically, $\mathbf{S}$ can be formed using phase-coded waveforms, where each waveform has constant modulus and different waveforms are orthogonal to one another, \eg using Hadamard codes \cite{li2007mimo}.}
Note that since the transmit beamforming is entirely in the analog domain, the RF chains directly carry the radar-modulated signals, which can be generated by low-resolution DACs. In addition, there is no per-symbol digital beamforming, which significantly reduces the computational complexity at the transmitter.
The additive white Gaussian noise (AWGN) matrix is denoted by $\mathbf{N} \triangleq \left[\mathbf{n}[1], \mathbf{n}[2], \ldots, \mathbf{n}[L]\right] \in \mathbb{C}^{N \times L}$, where $\mathbf{n}[t] \sim \mathcal{C}\mathcal{N}\left(\mathbf{0}, \mathbf{I}\right)$. We adopt $M = N$ such that the sensing signal achieves the full spatial DoFs (\ie $N^2-N$) for target sensing \cite{stoica2007probing, fuhrmann2008transmit}.



\subsection{MiLAC Model}
\label{subsec:milac_model}
\begin{figure}[t]
    \centering
    \includegraphics[width = 0.3\textwidth]{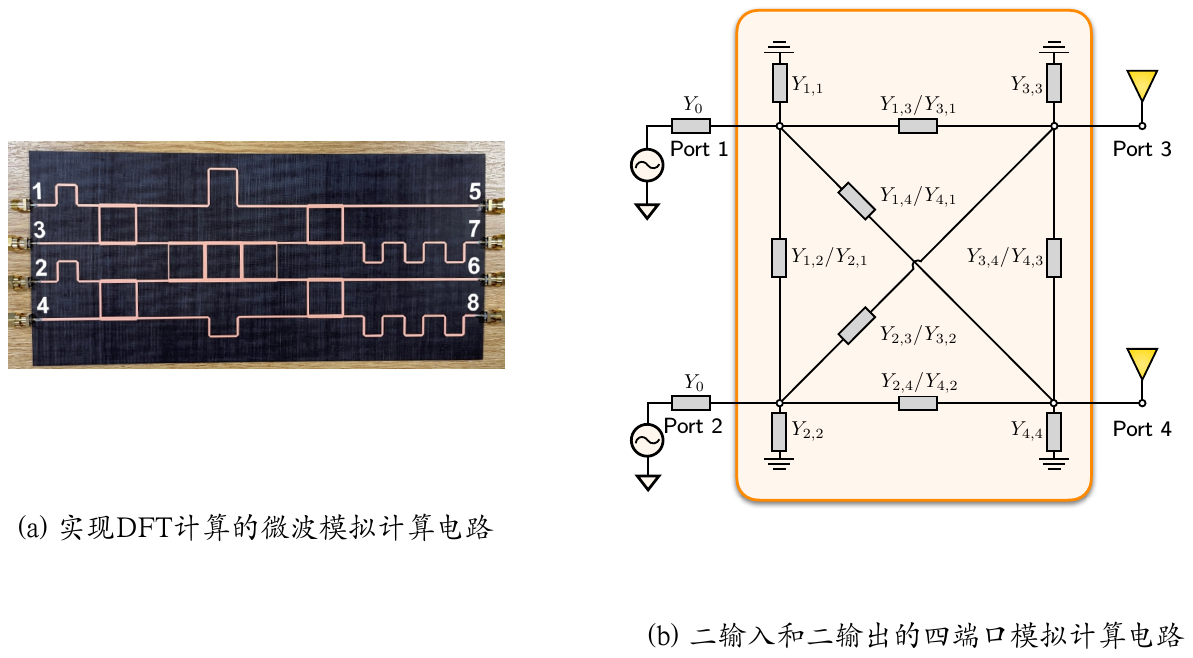}
    \centering
    \caption{A 4-port MiLAC with 2 input ports and 2 output ports. {Reciprocal MiLAC means $Y_{i,j} = Y_{j,i}$.}}
    \label{fig:milac_port}
\end{figure}
We consider a fully-connected MiLAC (cf. Fig. \ref{fig:milac_port}), where each pair of input and output ports is connected through a reconfigurable impedance and each port is also connected to ground through a reconfigurable impedance \cite{nerini2025analog1,nerini2025analog2,wu2026microwave}. The input-output relationship of the $(N+M)$-port MiLAC is characterized by $\mathbf{F}_\mathsf{MiLAC} \in \mathbb{C}^{N \times M}$, and can be expressed as the scattering matrix $\bthe \in \mathbb{C}^{(N+M) \times (N+M)}$. The scattering matrix relates the incident and reflected waves at all ports and is obtained from the admittance matrix $\bY \in \mathbb{C}^{(N+M) \times (N+M)}$ via \cite{nerini2025capacity}
\begin{equation}
    \label{Eq:theta_Y}
    \bthe=(\mathbf{I}+Z_0\bY)^{-1}(\mathbf{I}-Z_0\bY),
\end{equation}
where $Z_0 = 50\;\Omega$ is the characteristic impedance. Following \cite{wu2026microwave}, the transfer matrix $\mathbf{F}_\mathsf{MiLAC}$ is extracted from $\bthe$ as
\begin{equation}
\label{eq:transfer_matrix}
\bF_{\mathsf{MiLAC}}=\frac{1}{2}[\bthe]_{M+1:M+N,1:M}.
\end{equation}

Subsequently, we impose two physical-compliant constraints on the microwave network, \ie \textit{Losslessness} and \textit{Reciprocity}. Specifically, losslessness requires that the network dissipates no energy, so that $\bthe$ is unitary. Reciprocity requires that the impedance between any two ports is identical in both directions, so that $\bthe$ is symmetric. Therefore, we have the following constraints:
\begin{equation}
\bthe^H\bthe=\mathbf{I}, \quad \bthe=\bthe^\top.
\end{equation}

\subsection{Transmit-side MiLAC-aided Beamforming}
\label{subsec:milac_bf}
Using \eqref{eq:transfer_matrix} and the lossless-reciprocal constraints, the set of achievable MiLAC-aided beamforming matrices is \cite{wu2026microwave}
\begin{equation}
\label{Wmilac0}
\begin{aligned}
\mathcal{W}_{\mathsf{MiLAC}}&=\big\{\bW=\mathbf{F}\text{diag}(\mathbf{p})^{\frac{1}{2}}\mid\bF=[\bthe]_{M+1:M+N,1:M},\\
&\bthe=\bthe^\top,~\bthe^H\bthe=\mathbf{I},~\mathbf{1}^\top\mathbf{p}\leq P_T\big\},
\end{aligned}
\end{equation}
where $\bF \triangleq 2\bF_{\mathsf{MiLAC}}$ is introduced to simplify notation, and the factor of two is absorbed into the channel normalization.
As shown in \cite{wu2026microwave}, the lossless and reciprocal constraints on the full scattering matrix $\bthe$ reduce to a spectral-norm constraint on the submatrix $\bF$, yielding the equivalent characterization
\begin{equation}
\label{MiLAC2}
\begin{aligned}
\mathcal{W}_{\mathsf{MiLAC}}=\big\{\bW&=\mathbf{F}\text{diag}(\mathbf{p})^{\frac{1}{2}}\mid \|\bF\|_2\leq 1,~\mathbf{1}^\top\mathbf{p}\leq P_T\big\}.
\end{aligned}
\end{equation}
If we focus on the beamforming matrix $\bW$, based on the Lemma 1 in \cite{wu2026microwave}, the set can be equivalently transformed to
\begin{equation}\label{MiLAC3}
\mathcal{W}_{\mathsf{MiLAC}}=\{\bW\mid \bW^H\bW\preceq\text{diag}(\mathbf{p}),~\mathbf{1}^\top\mathbf{p}\leq P_T\}.
\end{equation}
Utilizing such set of MiLAC-aided beamforming matrices, we can design the transmit beamforming for MIMO radar sensing as shown in \fig{fig:array} (a).


\subsection{Receiver-side MiLAC-aided Signal Processing}
Beyond transmit beamforming, MiLAC can also perform receiver-side processing as shown in \fig{fig:array} (b). A second MiLAC with transfer matrix $\mathbf{V}_\mathsf{MiLAC} \in \mathbb{C}^{N \times N}$ is placed at the receiver and designed to implement the 2D DFT for DoA estimation. Since the DoA is directly on the analog-domain, RF chains and ADCs are not required, which can significantly reduce hardware cost and power consumption caused by ADCs.
The resulting output is
\begin{equation}
\begin{aligned}
\label{eq:rx_milac_output}
    \mathbf{Z} & = \mathbf{V}_\mathsf{MiLAC} \mathbf{R} \\
    & = \mathbf{V}_\mathsf{MiLAC} \mathbf{A}(\psi_{x,k}, \psi_{y,k})\mathbf{F}_\mathsf{MiLAC} \mathbf{C} +\mathbf{V}_\mathsf{MiLAC} \, \mathbf{N}.
\end{aligned}
\end{equation}
Considering the lossless and reciprocal constraints, the feasible set for the receiver-side MiLAC is \cite{ nerini2025analog1, nerini2025analog2,nerini2025capacity, wu2026microwave}
\begin{equation}
\label{eq:milac_2DDFT}
\left\{\mathbf{V}_\mathsf{MiLAC} \rvert \mathbf{V}_\mathsf{MiLAC} \! = \! \frac{1}{2}[\boldsymbol{\Theta}]_{N+1: 2N, 1: N}, \boldsymbol{\Theta}\!=\!\boldsymbol{\Theta}^\top\!\!, \boldsymbol{\Theta}^H \boldsymbol{\Theta}\!= \!\mathbf{I} \right\}
\end{equation}

\section{Transmit-side MiLAC-aided Beamforming: Problem Formulation and Solution}
\label{sec:problem_formulation}
In this section, we first derive the CRB for 2D multi-target angle estimation, which serves as the performance metric for sensing. Subsequently, we formulate the transmit beamforming optimization problem for MiLAC-aided MIMO radar sensing under power constraint, and lossless and reciprocal constraints. In addition, we consider the digital beamforming design as a benchmark for performance comparison.

\subsection{Cram\'er Rao Bound for 2D Sensing}
Sensing performance is characterized by the Cramér Rao bound (CRB), which gives a lower bound on the mean square error (MSE) achievable by any unbiased estimator \cite{Kay_B_1993}. {The CRB is determined by the likelihood of the raw observation $\mathbf{R}$ in \eqref{eq:rx_signal}, not by specific post-processing applied to it \cite{Kay_B_1993}, \eg designing $\mathbf{V}_\mathsf{MiLAC}$ to realize DFT-based DoA detection in our case. Consequently, the CRB depends only on the transmit beamforming matrix $\bW$ and is independent of $\mathbf{V}_\mathsf{MiLAC}$.}
Let $\mathbf{J}$ denote the Fisher information matrix (FIM). Then, the CRB matrix is given by $\boldsymbol{\Xi} = \mathbf{J}^{-1}$.
For $K$ targets, define the unknown parameter vector as
\begin{equation}
    \boldsymbol{\xi} = [\boldsymbol{\theta}^\top, \boldsymbol{\alpha}^\top]^\top \in \mathbb{R}^{4K \times 1},
\end{equation}
where $\boldsymbol{\theta} \triangleq [\boldsymbol{\theta}_1^\top, \dots, \boldsymbol{\theta}_K^\top]^\top$ collects the target angles with $\boldsymbol{\theta}_k \triangleq [\vartheta_k, \varphi_k]^\top$, and $\boldsymbol{\alpha} \triangleq [\boldsymbol{\alpha}_1^\top, \dots, \boldsymbol{\alpha}_K^\top]^\top$ collects the real and imaginary parts of the radar cross-section (RCS), with $\boldsymbol{\alpha}_k \triangleq [\mathfrak{R}\{\alpha_{r,k}\}, \mathfrak{I}\{\alpha_{r,k}\}]^\top$.

Based on \eqref{eq:rx_signal}, the noise-free received signal is
\begin{equation}
    \begin{aligned}
    \boldsymbol{\mu} & = \sum_{k=1}^K \alpha_{r,k} \mathbf{A}_k(\psi_{x,k}, \psi_{y,k}) \mathbf{F}_\mathsf{MiLAC} \operatorname{diag}(\mathbf{p})^\frac{1}{2} \mathbf{S} \\
    & = \frac{1}{2} \sum_{k=1}^K \alpha_{r,k} \mathbf{A}_k(\psi_{x,k}, \psi_{y,k}) \mathbf{W} \mathbf{S} = \mathbf{A}\bW\mathbf{S},
    \end{aligned}
\end{equation}
where $\mathbf{W} \in \mathcal{W}_\mathsf{MiLAC}$, and the composite round-trip steering matrix is defined as
\begin{equation}
\label{eq:composite_A}
\mathbf{A} \triangleq \frac{1}{2}\sum_{k=1}^K \alpha_{r,k}\,\mathbf{A}_k(\psi_{x,k},\psi_{y,k}) \in \C^{N \times N}.
\end{equation}
The factor of $\frac{1}{2}$ arises because the transmit power constraint is imposed on the voltage-generator signal $\mathbf{C}$, not on the antenna signal $\mathbf{X}$. Specifically, under perfect matching to the source impedance $Z_0$, the radiated signal is $\mathbf{X} = \mathbf{C}/2$ \cite{nerini2025capacity}.\footnote{The same $\frac{1}{2}$ factor appears in the fully-digital case, so the noise-free received signal retains the expression above with $\mathbf{W}$ replaced by the digital beamforming matrix. The set of the digital beamforming matrix is given by $\mathcal{W}_\mathsf{Digital} = \{\mathbf{W} \mid \| \mathbf{W} \|^2_F  \leq P_T\}$, which also imposes the same total transmit power constraint as the MiLAC-aided beamforming. This ensures a fair comparison between MiLAC-aided and fully-digital beamforming under the same transmit power budget.}

Since each parameter $\xi_i$ belongs to a specific target $k$, and only $\mathbf{A}_k$ depends on the parameters of target $k$, the derivatives of $\boldsymbol{\mu} = \mathbf{A}\bW\mathbf{S}$ w.r.t.\ target $k$ are given by
\begin{subequations} \label{eq:76}
\begin{align}
\frac{\partial \boldsymbol{\mu}}{\partial \vartheta_k} &= \frac{\partial \mathbf{A}}{\partial \vartheta_k}\bW \mathbf{S} = \frac{1}{2} \alpha_{r,k} \, \dot{\mathbf{A}}_{\vartheta_k}\mathbf{W} \mathbf{S}, \label{eq:76a} \\
\frac{\partial \boldsymbol{\mu}}{\partial \varphi_k} &= \frac{\partial \mathbf{A}}{\partial \varphi_k}\bW \mathbf{S} = \frac{1}{2} \alpha_{r,k} \, \dot{\mathbf{A}}_{\varphi_k}\mathbf{W} \mathbf{S}, \label{eq:76b} \\
\frac{\partial \boldsymbol{\mu}}{\partial \boldsymbol{\alpha}_k} &= \frac{\partial \mathbf{A}}{\partial \boldsymbol{\alpha}_k}\bW \mathbf{S} = \begin{bmatrix} 1 & \jmath \end{bmatrix}^\top \otimes \frac{1}{2} \mathbf{A}_k\mathbf{W} \mathbf{S}, \label{eq:76c}
\end{align}
\end{subequations}
where $\dot{\mathbf{A}}_{\vartheta_k} \triangleq \frac{\partial \mathbf{A}_k}{\partial \vartheta_k}$ and $\dot{\mathbf{A}}_{\varphi_k} \triangleq \frac{\partial \mathbf{A}_k}{\partial \varphi_k}$ denote the derivatives of the per-target round-trip steering matrix w.r.t.\ the elevation and azimuth angles, respectively. Their explicit expressions are provided in Appendix A.

For estimating $\boldsymbol{\xi}$ from $L$ observations, the FIM has the block structure \cite{Kay_B_1993}
\begin{equation}
\label{eq:fim_mat}
    \mathbf{J}=\left[\begin{array}{ll} \mathbf{J}_{\boldsymbol{\theta} \boldsymbol{\theta}} & \mathbf{J}_{\boldsymbol{\theta} \boldsymbol{\alpha}} \\ \mathbf{J}_{\boldsymbol{\theta} \boldsymbol{\alpha}}^\top & \mathbf{J}_{\boldsymbol{\alpha} \boldsymbol{\alpha}} \end{array}\right].
\end{equation}
The submatrices $\mathbf{J}_{\boldsymbol{\theta} \boldsymbol{\theta}} \in \mathbb{R}^{2K \times 2K}$, $\mathbf{J}_{\boldsymbol{\theta} \boldsymbol{\alpha}} \in \mathbb{R}^{2K \times 2K}$, and $\mathbf{J}_{\boldsymbol{\alpha} \boldsymbol{\alpha}} \in \mathbb{R}^{2K \times 2K}$ are assembled from the contributions of all target pairs. Each scalar entry of the FIM is given by
\begin{equation}
\label{eq:fim_entry}
    [\mathbf{J}]_{i, j} =\frac{2}{\sigma_r^2} \mathfrak{R}\left\{\sum_{\ell=1}^L \frac{\partial \boldsymbol{\mu}[\ell]^H}{\partial \xi_i} \frac{\partial \boldsymbol{\mu}[\ell]}{\partial \xi_j}\right\}, \quad i, j \in \{1, \dots, 4K\}.
\end{equation}
Substituting the derivatives \eqref{eq:76}, which share the common structure $\frac{\partial \boldsymbol{\mu}}{\partial \xi_i} = \frac{\partial \mathbf{A}}{\partial \xi_i}\bW\mathbf{S}$, into \eqref{eq:fim_entry} and using $\mathbb{E}\{\mathbf{S}\mathbf{S}^H\} = L\mathbf{I}$ yields
\begin{equation}
\label{eq:fim_trace}
    [\mathbf{J}]_{i, j} = \frac{2 L}{\sigma_r^2} \Re\left\{\operatorname{tr}\left(\frac{\partial \mathbf{A}^H}{\partial \xi_i} \frac{\partial \mathbf{A}}{\partial \xi_j} \mathbf{R}_x\right)\right\},
\end{equation}
where $\mathbf{R}_x \triangleq \bW\bW^H$ is the transmit covariance matrix. This shows that the FIM, and hence the CRB, depends on $\bW$ only through $\mathbf{R}_x$.

\subsection{Equivalence of MiLAC-aided and Digital Beamforming in Terms of CRB}
\label{subsec:crb_equiv}
The following theorem shows that, despite the restricted beamforming flexibility of lossless and reciprocal MiLAC, the MiLAC-aided and fully-digital designs achieve the same optimal CRB.

\begin{theorem}
\label{prop:same_crb} Under the assumption $M=N$ to achieve full DoFs, let $\mathcal{R}_\mathsf{MiLAC} \triangleq \{\mathbf{R}_x = \bW\bW^H \mid \bW \in \mathcal{W}_\mathsf{MiLAC}\}$ and $\mathcal{R}_\mathsf{Digital} \triangleq \{\mathbf{R}_x = \bW\bW^H \mid \bW \in \mathcal{W}_\mathsf{Digital}\}$ denote the sets of transmit covariance matrices achievable by MiLAC-aided and fully-digital beamforming, respectively. Then
\begin{equation}
\label{eq:same_cov_set}
\mathcal{R}_\mathsf{MiLAC} = \mathcal{R}_\mathsf{Digital} = \{\mathbf{R}_x \succeq \mathbf{0} \mid \trace(\mathbf{R}_x) \leq P_T\}.
\end{equation}
Consequently, MiLAC-aided and fully-digital transmit beamforming designs achieve the same optimal CRB value.
\end{theorem}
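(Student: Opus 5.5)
We establish the two set equalities in \eqref{eq:same_cov_set} separately, each as a pair of inclusions, and then read off the CRB claim from \eqref{eq:fim_trace}. Let $\mathcal{S} \triangleq \{\mathbf{R}_x \succeq \mathbf{0} \mid \trace(\mathbf{R}_x) \leq P_T\}$.

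\emph{Digital case.} Any $\bW \in \mathcal{W}_\mathsf{Digital}$ gives $\bW\bW^H \succeq \mathbf{0}$ and $\trace(\bW\bW^H) = \|\bW\|_F^2 \leq P_T$, so $\mathcal{R}_\mathsf{Digital} \subseteq \mathcal{S}$. Conversely, for $\mathbf{R}_x \in \mathcal{S}$ we take $\bW = \mathbf{R}_x^{1/2} \in \C^{N\times N}$. This is where $M=N$ is used, since an $N\times M$ factor with $M<N$ could not reach full-rank $\mathbf{R}_x$.

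\emph{MiLAC case.} We use the characterization \eqref{MiLAC3}. If $\bW^H\bW \preceq \operatorname{diag}(\mathbf{p})$ with $\mathbf{1}^\top\mathbf{p}\leq P_T$, then taking traces gives
\[
\trace(\bW\bW^H)=\trace(\bW^H\bW)\leq \mathbf{1}^\top\mathbf{p}\leq P_T ,
\]
so $\mathcal{R}_\mathsf{MiLAC}\subseteq\mathcal{S}$. For the reverse inclusion we need a factor whose Gram matrix is diagonal. Given $\mathbf{R}_x\in\mathcal{S}$, we take the eigendecomposition $\mathbf{R}_x=\bU\bLam\bU^H$ with $\bU$ unitary and $\bLam=\operatorname{diag}(\lambda_1,\dots,\lambda_N)\succeq\mathbf{0}$. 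We then set
\[
\bW=\bU\bLam^{1/2}, \qquad \mathbf{p}=[\lambda_1,\dots,\lambda_N]^\top .
\]
This gives $\bW\bW^H=\mathbf{R}_x$, $\bW^H\bW=\bLam=\operatorname{diag}(\mathbf{p})$ (so the constraint holds with equality), and $\mathbf{1}^\top\mathbf{p}=\trace(\mathbf{R}_x)\leq P_T$. Hence $\bW\in\mathcal{W}_\mathsf{MiLAC}$.

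As a sanity check against \eqref{MiLAC2}, the corresponding $\bF=\bU$ is unitary, so $\|\bF\|_2=1$. Any zero eigenvalues can be handled by letting the matching columns of $\bF$ be arbitrary unit-norm columns of $\bU$. A unitary $\bF$ therefore lies in the feasible set by Lemma 1 of \cite{wu2026microwave}, which means a symmetric unitary $\bthe$ exists realizing it.

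\emph{CRB consequence.} By \eqref{eq:fim_trace}, $\mathbf{J}$, and hence any scalar CRB objective, is a function of $\mathbf{R}_x$ alone. Minimizing over $\mathcal{W}_\mathsf{MiLAC}$ and over $\mathcal{W}_\mathsf{Digital}$ is therefore the same as minimizing over the common set $\mathcal{S}$, so the two optimal values coincide.

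The only delicate step is the reverse MiLAC inclusion. There we must ensure the factor satisfies $\bW^H\bW\preceq\operatorname{diag}(\mathbf{p})$ for some admissible $\mathbf{p}$, which the eigenvector choice handles by making $\bW^H\bW$ exactly diagonal. This relies on the equivalence \eqref{MiLAC3}, which we take as given from earlier in the paper.
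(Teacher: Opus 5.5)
Your proposal is correct and follows the paper's argument. The key step is the same eigendecomposition construction $\bW=\bU\bLam^{1/2}$, $\mathbf{p}=(\lambda_1,\dots,\lambda_N)^\top$, which makes $\bW^H\bW$ exactly diagonal; the CRB claim then follows, as in the paper, from the FIM depending on $\bW$ only through $\mathbf{R}_x$. The differences are minor additions: you prove $\mathcal{R}_\mathsf{MiLAC}\subseteq\{\mathbf{R}_x\succeq\mathbf{0}\mid\trace(\mathbf{R}_x)\leq P_T\}$ directly by a trace argument instead of citing $\mathcal{W}_\mathsf{MiLAC}\subseteq\mathcal{W}_\mathsf{Digital}$, and you explicitly verify that both covariance sets equal $\{\mathbf{R}_x\succeq\mathbf{0}\mid\trace(\mathbf{R}_x)\leq P_T\}$, which the paper's proof leaves implicit.
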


\begin{proof}
We prove the two inclusions separately.

$(\mathcal{R}_\mathsf{MiLAC} \subseteq \mathcal{R}_\mathsf{Digital})$: Since every MiLAC-achievable beamforming matrix is also digitally achievable, \ie $\mathcal{W}_\mathsf{MiLAC} \subseteq \mathcal{W}_\mathsf{Digital}$ \cite{wu2026microwave}, any transmit covariance $\mathbf{R}_x = \bW\bW^H$ realized by some $\bW \in \mathcal{W}_\mathsf{MiLAC}$ is also realized by the same $\bW \in \mathcal{W}_\mathsf{Digital}$.

$(\mathcal{R}_\mathsf{Digital} \subseteq \mathcal{R}_\mathsf{MiLAC})$: Let $\mathbf{R}_x \in \mathcal{R}_\mathsf{Digital}$ be an arbitrary digitally achievable transmit covariance. By definition, there exists some $\bW' \in \mathcal{W}_\mathsf{Digital}$ with $\bW'{\bW'}^H = \mathbf{R}_x$. To show $\mathbf{R}_x \in \mathcal{R}_\mathsf{MiLAC}$, it suffices to find \emph{one} $\bW \in \mathcal{W}_\mathsf{MiLAC}$ satisfying $\bW\bW^H = \mathbf{R}_x$. Consider the eigendecomposition $\mathbf{R}_x = \bU \bLam \bU^H$, where $\bU \in \C^{N \times N}$ is unitary and $\bLam = \text{diag}(\lambda_1, \dots, \lambda_N) \succeq \mathbf{0}$. Construct $\bW = \bU \bLam^{\frac{1}{2}}$ and set $\mathbf{p} = (\lambda_1, \dots, \lambda_N)^\top$. Then: (i) $\bW\bW^H = \bU\bLam\bU^H = \mathbf{R}_x$; (ii) $\bW^H\bW = \bLam^{\frac{1}{2}}\bU^H\bU\bLam^{\frac{1}{2}} = \bLam = \text{diag}(\mathbf{p})$, so $\bW^H\bW \preceq \text{diag}(\mathbf{p})$; (iii) $\mathbf{1}^\top\mathbf{p} = \trace(\bLam) = \trace(\mathbf{R}_x) \leq P_T$. Hence $\bW \in \mathcal{W}_\mathsf{MiLAC}$ by \eqref{MiLAC3}, confirming $\mathbf{R}_x \in \mathcal{R}_\mathsf{MiLAC}$.

Combining both inclusions yields $\mathcal{R}_\mathsf{MiLAC} = \mathcal{R}_\mathsf{Digital}$. Since the FIM $\mathbf{J}$ in \eqref{eq:fim_mat} is a function of $\mathbf{R}_x = \bW\bW^H$ only, the CRB $\trace(\mathbf{J}^{-1})$ is identical for any two beamforming matrices sharing the same $\mathbf{R}_x$. The equality of the covariance sets therefore implies that MiLAC-aided and fully-digital transmit beamforming designs achieve the same optimal CRB value.
\end{proof}

\begin{remark}
\label{rmk:crb_vs_sumrate}
Theorem~\ref{prop:same_crb} stands in contrast to the communication setting in \cite{wu2026microwave}, where lossless and reciprocal MiLAC-aided beamforming leads to a sum-rate loss compared to fully-digital beamforming at high SNR. The difference arises because the CRB depends on $\bW$ only through $\mathbf{R}_x = \bW\bW^H$, whereas the multiuser sum-rate depends on the individual columns of $\bW$ via the per-user SINR. The spectral-norm constraint $\|\bF\|_2 \leq 1$ imposed by lossless and reciprocal MiLAC restricts the column correlations of $\bW$, which excludes full-power matrices with non-orthogonal columns \cite{wu2026microwave}. This restriction affects the sum-rate, which is sensitive to the column-level structure, but does not affect the CRB, which depends only on the aggregate spatial covariance $\mathbf{R}_x$.
\end{remark}

\subsection{MiLAC Transmit Beamforming Design}
We aim to design the beamforming matrix $\bW$ to minimize the total CRB across all target parameters and thus improve the sensing performance, subject to the MiLAC lossless and reciprocal constraints in \eqref{MiLAC3}. This yields the following optimization problem:
\begin{mini!}|s|[2]
{\mathbf{W}, \mathbf{p}}{\trace\{\boldsymbol{\Xi}\}}
{\label{eq:opt}}{\mathcal{P}1:}
\addConstraint{\bW^H\bW\preceq\text{diag}(\mathbf{p}),}
\addConstraint{\mathbf{1}^\top\mathbf{p}\leq P_T.}
\end{mini!}
Directly minimizing $\trace\{\boldsymbol{\Xi}\}=\trace\{\mathbf{J}^{-1}\}$ is non-convex due to the matrix inverse.
Following \cite{li2007range}, we introduce auxiliary variables $\{t_k\}_{k=1}^{4K}$ and reformulate the problem using the Schur complement. Specifically, the condition $t_k \geq [\mathbf{J}^{-1}]_{k,k}$ is equivalently expressed as the linear matrix inequality (LMI) in \eqref{eq:op1ac1}, where $\mathbf{e}_k$ denotes the $k$th column of the $4K$-dimensional identity matrix. Since the FIM $\mathbf{J}$ in \eqref{eq:fim_mat} is a linear function of the transmit covariance $\mathbf{R}_x \triangleq \bW \bW^H$, the resulting formulation is a semidefinite program (SDP) w.r.t.\ $\mathbf{R}_x$ as follows:
\vspace{-5pt}
\begin{mini!}|s|[2]
{\mathbf{W}, \mathbf{p}, \{t_k\}_{k=1}^{4K}}{\sum_{k=1}^{4K}\mu_k t_k}
{\label{eq:opt1a}}{\mathcal{P}2:}
\addConstraint{\left[\begin{array}{ll}
\mathbf{J} & \mathbf{e}_k \\
    \mathbf{e}_k^\top & t_k
    \end{array}\right]}{\succeq \mathbf{0}, \, k = 1, \cdots, 4K \label{eq:op1ac1}}
\addConstraint{\left[\begin{matrix} \mathbf{I}&\bW\\\bW^H&\text{diag}(\mathbf{p})\end{matrix}\right]\succeq \mathbf{0},\label{eq:op1ac3}}
\addConstraint{\mathbf{1}^\top\mathbf{p}\leq P_T.}
\end{mini!}
The LMI constraint \eqref{eq:op1ac3} is transformed from the original quadratic constraint $\bW^H\bW\preceq\text{diag}(\mathbf{p})$ by Schur complement.

The difficulty for solving problem $\mathcal{P}2$ is that the FIM depends on $\bW$ through the transmit covariance $\mathbf{R}_x = \bW \bW^H$, whereas the MiLAC constraint involves the Gram matrix $\bW^H \bW$. This creates a non-trivial coupling between the two sides of $\bW$. To decouple them, we introduce an auxiliary copy $\mathbf{X} = \bW$ so that the covariance constraint $\mathbf{R}_x = \mathbf{X} \mathbf{X}^H$ is separated from the MiLAC constraint on $\bW$. This leads to the following reformulation:
\begin{mini!}|s|[2]
{\mathbf{W}, \mathbf{p}, \{t_k\}_{k=1}^{4K}}{\sum_{k=1}^{4K}\mu_k t_k}
{\label{eq:opt3}}{\mathcal{P}3:}
\addConstraint{\left[\begin{array}{ll}
\mathbf{J} & \mathbf{e}_k \\
    \mathbf{e}_k^\top & t_k
    \end{array}\right]}{\succeq \mathbf{0}, \, k = 1, \cdots, 4K \label{eq:opt3c1}}
\addConstraint{\left[\begin{matrix} \mathbf{I}&\bW\\\bW^H&\text{diag}(\mathbf{p})\end{matrix}\right]\succeq \mathbf{0},\label{eq:op3c2}}
\addConstraint{\mathbf{1}^\top\mathbf{p}\leq P_T,}
\addConstraint{\bW}{= \mathbf{X}, \label{eq:op3c3}}
\addConstraint{\mathbf{X} \mathbf{X}^H}{= \mathbf{R}_x. \label{eq:op3c4}}
\end{mini!}

Problem $\mathcal{P}3$ involves two coupled constraints: the equality $\bW = \mathbf{X}$ in \eqref{eq:op3c3} and the quadratic relation $\mathbf{X} \mathbf{X}^H = \mathbf{R}_x$ in \eqref{eq:op3c4}. To address this problem, the penalty dual decomposition (PDD) method \cite{shi2020penalty} is applied to solve \eqref{eq:opt3} iteratively. Specifically, the PDD method introduces the augmented Lagrangian function by dualizing the equality constraint \eqref{eq:op3c3} with a Lagrange multiplier $\mathbf{\Gamma}$ and adding a quadratic penalty term with penalty parameter $\rho > 0$. The resulting problem is:
\begin{mini!}|s|[2]
{\mathbf{W}, \mathbf{p}, \{t_k\}_{k=1}^{4K}}{\sum_{k=1}^{4K} \mu_k t_k + \frac{1}{2 \rho}   \|\bW - \mathbf{X} + \rho \mathbf{ \Gamma} \|^2}
{\label{eq:opt4new}}{\label{eq:opt4func}\mathcal{P}4:}
\addConstraint{\left[\begin{array}{ll}
\mathbf{J} & \mathbf{e}_k \\
    \mathbf{e}_k^\top & t_k
    \end{array}\right]}{\succeq \mathbf{0}, \, k = 1, \cdots, 4K \label{eq:op4c1}}
\addConstraint{\left[\begin{matrix} \mathbf{I}&\bW\\\bW^H&\text{diag}(\mathbf{p})\end{matrix}\right]\succeq \mathbf{0},\label{eq:op4c2}}
\addConstraint{\mathbf{1}^\top\mathbf{p}\leq P_T,}
\addConstraint{\mathbf{X} \mathbf{X}^H}{= \mathbf{R}_x. \label{eq:op4c3}}
\end{mini!}

The PDD consists of an inner loop that solves the augmented Lagrangian problem and an outer loop that updates the dual variable and penalty parameter. In the inner loop, the problem is solved by alternating optimization over three variable blocks: $\{\mathbf{R}_x, \{t_k\}\}$, $\{\mathbf{W}, \mathbf{p}\}$, and $\mathbf{X}$. When updating each block, other variables are fixed and the designed variable is optimized by solving a convex sub-problem. In the outer loop, the dual variable $\mathbf{\Gamma}$ is updated by a gradient ascent step, and the penalty parameter $\rho$ is updated by a geometric decay rule. This alternating optimization continues until convergence. The details of the inner loop block update and outer loop update are explained below.

\subsubsection{Inner Loop Block Update} 
The details of the inner loop block update are as follows.

\bpara{Update $\mathbf{R}_x, \{t_k\}_{k=1}^{4K}$.}
With $\mathbf{W}$ and $\mathbf{X}$ fixed, the penalty term in \eqref{eq:opt4func} is constant, and thus the sub-problem is given by
\begin{mini!}|s|[2]
{\mathbf{R}_x, \{t_k\}_{k=1}^{4K}}{\sum_{k=1}^{4K}\mu_k t_k}
{\label{eq:optrxt}}{\mathcal{P}5:}
\addConstraint{\left[\begin{array}{ll}
\mathbf{J} & \mathbf{e}_k \\
    \mathbf{e}_k^\top & t_k
    \end{array}\right]}{\succeq \mathbf{0}, \, k = 1, \cdots, 4K \label{eq:oprxtc1}}
\addConstraint{\mathbf{R}_x}{\succeq \mathbf{0},
\label{eq:oprxtc2}}
\end{mini!}
which is a SDP since the FIM $\mathbf{J}$ is a linear function of $\mathbf{R}_x$, and can be solved efficiently.

\bpara{Update $\mathbf{p}, \mathbf{W}$.}
With $\mathbf{R}_x$, $\{t_k\}$, and $\mathbf{X}$ fixed, only the augmented Lagrangian penalty term depends on $\bW$. The sub-problem w.r.t. $\bW$ and $\mathbf{p}$ is given by
\begin{mini!}|s|[2]
{\bW}{\|\bW - \mathbf{X} + \rho \mathbf{ \Gamma} \|^2}
{\label{eq:optw}}{\mathcal{P}6:}
\addConstraint{\left[\begin{matrix} \mathbf{I}&\bW\\\bW^H&\text{diag}(\mathbf{p})\end{matrix}\right]\succeq \mathbf{0},}
\addConstraint{\mathbf{1}^\top\mathbf{p}\leq P_T.}
\end{mini!}
This is a also SDP and can be solved efficiently.

\bpara{Update $\mathbf{X}$.}
With $\mathbf{R}_x$, $\bW$, and $\mathbf{p}$ fixed, the sub-problem w.r.t $\mathbf{X}$ aims to minimize the penalty term subject to the covariance constraint:
\begin{mini!}|s|[2]
{\mathbf{X}}{ \| \mathbf{X} - (\rho \mathbf{ \Gamma} + \bW) \|^2}
{\label{eq:optx}}{\mathcal{P}7:}
\addConstraint{\mathbf{X} \mathbf{X}^H}{= \mathbf{R}_x. \label{eq:opxc1}}
\end{mini!}
To solve the sub-problem, we first define $\boldsymbol{\Delta} \triangleq \rho \mathbf{\Gamma} + \bW$. Subsequently, we expand the objective function, which has the following form
\begin{equation}
    \trace(\mathbf{X} \mathbf{X}^H) - 2 \Re(\trace(\boldsymbol{\Delta}^H \mathbf{X})) + \trace(\boldsymbol{\Delta} \boldsymbol{\Delta}^H).
\end{equation}
Since the constraint fixes $\mathbf{X} \mathbf{X}^H = \mathbf{R}_x$, both $\trace(\mathbf{X} \mathbf{X}^H) = \trace(\mathbf{R}_x)$ and $\trace(\boldsymbol{\Delta} \boldsymbol{\Delta}^H)$ are constant. The problem therefore reduces to 
\begin{mini!}|s|[2]
{\mathbf{X}}{  \Re(\trace(\boldsymbol{\Delta}^H \mathbf{X}))}
{\label{eq:optxequ}}{\mathcal{P}8:}
\addConstraint{\mathbf{X} \mathbf{X}^H}{= \mathbf{R}_x. \label{eq:opxequc1}}
\end{mini!}
Substituting $\mathbf{X} = \mathbf{R}_x^{1/2} \mathbf{Q}$ with $\mathbf{Q} \mathbf{Q}^H = \mathbf{I}$, which automatically satisfies the constraint $\mathbf{X} \mathbf{X}^H = \mathbf{R}_x$, the problem becomes the unitary Procrustes problem \cite{schonemann1966generalized}:
\begin{mini!}|s|[2]
{\mathbf{Q}}{ \Re(\trace(\boldsymbol{\Psi}^H \mathbf{Q})),}
{\label{eq:optq}}{\mathcal{P}9:}
\addConstraint{\mathbf{Q} \mathbf{Q}^H}{= \mathbf{I}, \label{eq:opq1}}
\end{mini!}
where $\boldsymbol{\Psi} \triangleq \mathbf{R}_x^{1/2} \boldsymbol{\Delta}$. Let $\boldsymbol{\Psi} = \mathbf{U} \boldsymbol{\Sigma} \mathbf{V}^H$ be its singular value decomposition (SVD). The closed-form solution is \cite{schonemann1966generalized}
\begin{equation}
    \mathbf{Q} = \mathbf{U} \mathbf{V}^H.
\end{equation}
The original variable $\mathbf{X}$ is then recovered as 
\begin{equation}
\label{eq:optimal_x}
    \mathbf{X} = \mathbf{R}_x^{1/2} \mathbf{Q} = \mathbf{R}_x^{1/2} \mathbf{U} \mathbf{V}^H.
\end{equation}

\subsubsection{Outer Loop}
Once the inner loop converges, the outer loop checks whether the equality constraint $\bW = \mathbf{X}$ is sufficiently satisfied, using the criterion $\| \bW - \mathbf{X} \|_\infty < \epsilon$ for a small positive threshold $\epsilon$. If the criterion is met, the dual variable is updated as
\begin{equation}
    \mathbf{\Gamma} = \mathbf{\Gamma} + \rho^{-1} (\bW - \mathbf{X}).
\end{equation}
Otherwise, the penalty coefficient is tightened via $\rho = c\rho$, $c \in (0,1)$, to strengthen the enforcement of the equality constraint in subsequent iterations. The overall procedure is summarized in Algorithm~\ref{alg:milac_beamforming}.

\begin{algorithm}[t]
	\caption{Proposed Penalty-based Algorithm for MiLAC Transmit Beamforming Design}
	\label{alg:milac_beamforming}
	\KwIn{$\mathbf{A}_k, k \in \mathcal{K}$, weights $\boldsymbol{\mu}$, penalty scaling factor $c \in (0,1)$, and tolerance $\epsilon$.}  
	\KwOut{$\bW^\mathsf{opt}$, $\mathbf{p}^\mathsf{opt}$, and $\{t_k^\mathsf{opt}\}_{k=1}^{4K}$.} 
	\BlankLine
	Initialize variables $\bW, \mathbf{p}, \mathbf{X},\mathbf{\Gamma} = \mathbf{0}$, $\rho > 0$, and set $t_\mathsf{out}=1, t_\mathsf{in}=1$.\\
	
	\While{\textnormal{no convergence of objective function \eqref{eq:opt4func} \textbf{\&} $\| \bW - \mathbf{X} \|_\infty > \epsilon$ \textbf{\&}} $ t_\mathsf{out}<t_\mathsf{max}^\mathsf{out}$ }{
        \While{\textnormal{no convergence of objective function \eqref{eq:opt4func} \textbf{\&}} $ t_\mathsf{in}<t_\mathsf{max}^\mathsf{in}$}{
            Update $\mathbf{R}_x$ and $\{t_k\}_{k=1}^{4K}$ by solving the subproblem \eqref{eq:optrxt}. \\
            Update $\mathbf{p}$ and $\bW$ by solving the subproblem \eqref{eq:optw}. \\
            Update auxiliary variable $\mathbf{X}$ by \eqref{eq:optimal_x}, \ie $\mathbf{X} = \mathbf{R}_x^{1/2} \mathbf{U} \mathbf{V}^H$. \\
            $t_\mathsf{in} = t_\mathsf{in} + 1$.\\
        }
        
        \If{$\| \bW - \mathbf{X} \|_\infty < \epsilon$}{
            Update the dual variable $\mathbf{\Gamma}$ via $\mathbf{\Gamma} = \mathbf{\Gamma} + \rho^{-1} (\bW - \mathbf{X})$. \\
        }
        \Else{
            Update the penalty coefficient $\rho$ via $\rho = c\rho$. \\
        }
        $t_\mathsf{out} = t_\mathsf{out} + 1$. \\
	}	
	Return $\bW^\mathsf{opt}$, $\mathbf{p}^\mathsf{opt}$ and $\{t_k^\mathsf{opt}\}_{k=1}^{4K}$.
\end{algorithm}

\subsection{Digital Beamforming Design}
We also provide the digital beamforming design as a performance benchmark. In the digital beamforming architecture, the transmit beamformer $\bW \in \C^{N \times M}$ is fully digital with only the power constraint $\| \bW \|_F^2 \leq P_T$. The CRB minimization problem is formulated as
\begin{mini!}|s|[2]
{\mathbf{W}}{\trace\{\boldsymbol{\Xi}\}}
{\label{eq:opt5}}{\mathcal{P}10:}
\addConstraint{\| \bW \|_F^2 \leq P_T.}
\end{mini!}
This problem can be transformed by SDP and letting $\mathbf{R}_x = \bW \bW^H \in \mathbb{C}^{N \times N}$, which is given by
\begin{mini!}|s|[2]
{\mathbf{R}_x, \{t_k\}_{k=1}^{4K}}{\sum_{k=1}^{4K}\mu_k t_k}
{\label{eq:opt5_rs}}{\mathcal{P}11:}
\addConstraint{\left[\begin{array}{ll}
\mathbf{J} & \mathbf{e}_k \\
    \mathbf{e}_k^\top & t_k
    \end{array}\right]}{\succeq \mathbf{0}, \, k = 1, \cdots, 4K \label{eq:op5_rsc1}}
\addConstraint{\trace{(\mathbf{R}_x)} \leq P_T,}
\addConstraint{\mathbf{R}_x \succeq \mathbf{0}.}
\end{mini!}
Since this problem is SDP, it can be solved efficiently as summarized in Algorithm \ref{alg:digital_beamforming}. Subsequently, $\bW$ is reconstructed from $\mathbf{R}_x$.

Compared to the MiLAC design in Algorithm~\ref{alg:milac_beamforming}, the digital beamforming design is simpler because the CRB depends on $\bW$ only through the transmit covariance $\mathbf{R}_x = \bW\bW^H$, so the optimization reduces to an SDP in $\mathbf{R}_x$. However, in the MiLAC case, the lossless-reciprocal constraint $\bW^H \bW\preceq\text{diag}(\mathbf{p})$ couples $\bW$ on both sides of the inequality. This prevents a direct reformulation in terms of $\mathbf{R}_x$. This structural difficulty thus motivates the proposed PDD-based algorithm.

\begin{algorithm}[t]
	\caption{SDP-based Algorithm for Digital Transmit Beamforming Design}
	\label{alg:digital_beamforming}
	\KwIn{$\mathbf{A}_k, k \in \mathcal{K}$.}  
	\KwOut{$\bW^\mathsf{opt}$, and $\{t_k^\mathsf{opt}\}_{k=1}^{4K}$.} 
	\BlankLine
	Initialize variables $\mathbf{R}_x$.\\
    
    Update $\mathbf{R}_x$ and $\{t_k\}_{k=1}^{4K}$ by solving the convex problem \eqref{eq:opt5_rs}. \\

    Reconstruct  $\bW$ from $\mathbf{R}_x$.\\
    
	Return $\bW^\mathsf{opt}$ and $\{t_k^\mathsf{opt}\}_{k=1}^{4K}$.
\end{algorithm}

\begin{figure}[t]
    \centering
    \includegraphics[width = 0.49\textwidth]{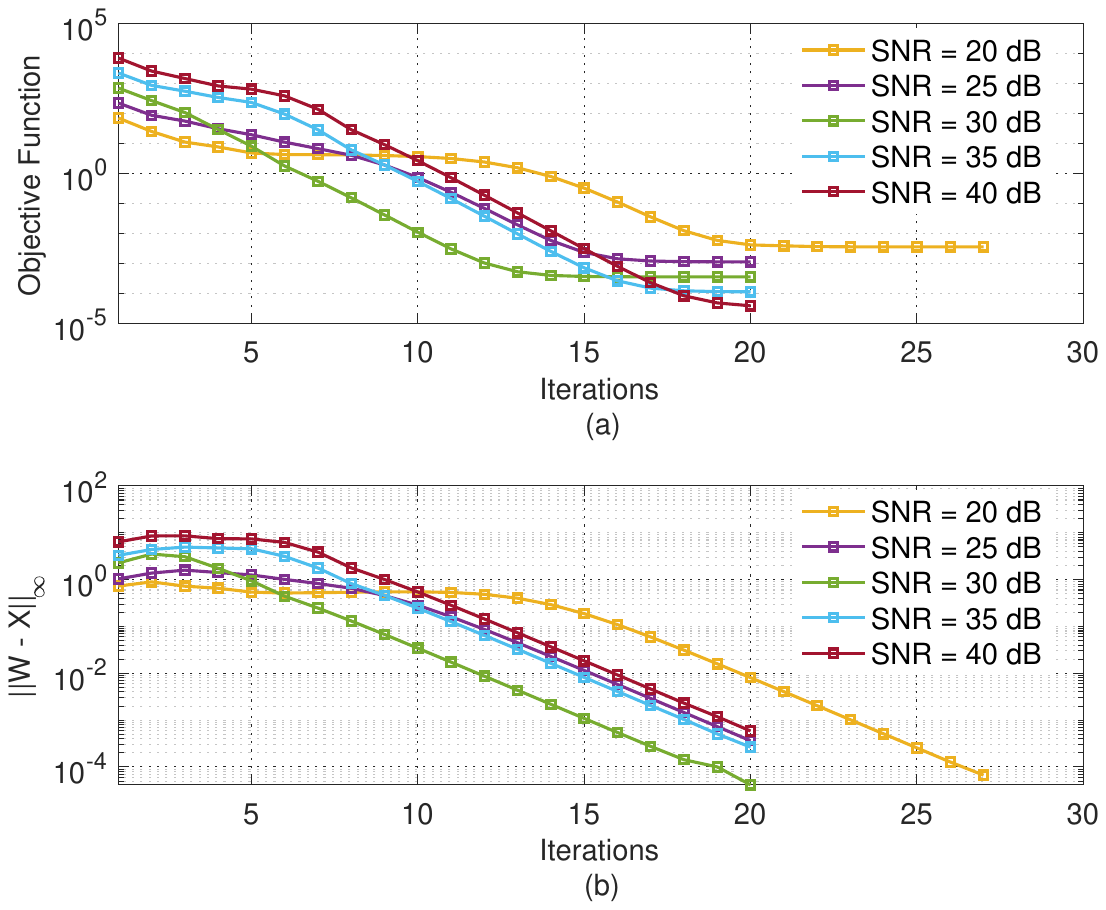}
    \centering
    \caption{Convergence of Algorithm \ref{alg:milac_beamforming} with $N = 64$ transmit antennas and 2 targets for detection. (a) Value of objective function \eqref{eq:opt4func} versus iteration numbers. (b) Difference between $\bW$ and auxiliary variable $\mathbf{X}$.}
    \label{fig:convergence_a1}
\end{figure}

\subsection{Convergence Analysis}
We briefly analyze the convergence of Algorithm~\ref{alg:milac_beamforming} and show that at least a local optimum of the problem $\mathcal{P}4$ can be obtained. We first denote value of objective function \eqref{eq:opt4func} at iteration $t$ as $f^t$. In the inner loop of Algorithm~\ref{alg:milac_beamforming}, the three subproblems $\mathcal{P}5$, $\mathcal{P}6$, and $\mathcal{P}7$ are both convex and thus can be solved to global optimality, ensuring that $f^t$ is non-increasing and a Karush–Kuhn–Tucker (KKT) point of the inner problem $\mathcal{P}4$ can be achieved at convergence. In the outer loop, the penalty coefficient $\rho$ is adaptively updated to enforce the equality constraint $\bW = \mathbf{X}$, According to [\cite{shi2020penalty}, Theorem 3.1], the proposed PDD algorithm converges to a KKT point of $\mathcal{P}4$ with $\epsilon \to 0$. 

We evaluate the convergence behavior of Algorithm~\ref{alg:milac_beamforming} through numerical simulations. Specifically, we check the convergence of the objective function value \eqref{eq:opt4func} in \fig{fig:convergence_a1} (a) and the primal residual $\| \bW - \mathbf{X} \|_\infty$ in \fig{fig:convergence_a1} (b) versus iterations. The simulation parameters are set as $N = 8 \times 8 = 64$ transmit antennas and $K = 2$ targets. With different SNRs, the objective function value decreases monotonically and converges in 30 iterations, as shown in \fig{fig:convergence_a1}(a). The curve with a higher SNR converges to a lower objective value, which is consistent with the fact that a higher SNR leads to a lower CRB. In addition, the primal residual $\| \bW - \mathbf{X} \|_\infty$ decreases monotonically and falls below the tolerance $\epsilon = 10^{-3}$ in 30 iterations, demonstrating the effectiveness of algorithm to enforce the equality constraint $\bW = \mathbf{X}$.

\subsection{Complexity Analysis}
We analyze the per-iteration complexity of Algorithm~\ref{alg:milac_beamforming} with $M=N$. Using the standard interior-point SDP complexity $\mathcal{O}(n^2 p^{2.5} + p^{3.5})$ for $n$  variables and LMI constraints of size $p$, the sub-problem $\mathcal{P}7$ dominates the computational complexity due to its $(N+M)$-size LMI constraint, which yields a per-iteration complexity of $\mathcal{O}(N^{6.5})$. Therefore, the overall complexity of Algorithm~\ref{alg:milac_beamforming} is given by $\mathcal{O}(I_\mathsf{out} \, I_\mathsf{in} \, N^{6.5})$, where $I_\mathsf{out}$ and $I_\mathsf{in}$ denote the maximum numbers of outer and inner iterations of the PDD method, respectively.



\section{Receiver-Side MiLAC Design for DoA Estimation}
\label{sec:receiver}
In this section, we design the receiver-side MiLAC to implement the 2D DFT for analog-domain DoA estimation. We first formulate the optimization problem and derive a closed-form solution. Then, we take the snapshots into account and extend the design to implement the fractional 2D DFT for improved DoA estimation performance.
\subsection{Problem Formulation and Closed-form Solution}
The target angles are estimated by applying the 2D DFT to the received signal in \eqref{eq:rx_milac_output} and locating the spectral peaks. The $(n, n')^\mathrm{th}$ element of the normalized 2D DFT matrix $\mathbf{F}_\mathsf{DFT} \in \mathbb{C}^{N \times N}$ is given by \cite{richards2005fundamentals}
\begin{equation}
f_{n, n'}=\frac{1}{\sqrt{N}} e^{-j 2 \pi \frac{\left(n_{x}-1\right)\left(n'_x-1\right)}{N_x}} e^{-j 2 \pi \frac{\left(n_{y}-1\right)\left(n'_y-1\right)}{N_y}},
\label{eq:2ddft}
\end{equation}
where $n_{x}$ and $n_{y}$ are defined in \eqref{eq:nxny}. Our goal is to design $\mathbf{V}_\mathsf{MiLAC}$ within the feasible set \eqref{eq:milac_2DDFT} so that it approximates $\mathbf{F}_\mathsf{DFT}$, therefore enabling analog-domain DoA estimation without high-complexity digital processing.

We measure the approximation quality by the Frobenius-norm error
\begin{equation}
\mathcal{L}=\|\beta \mathbf{V}_\mathsf{MiLAC}-\mathbf{F}_\mathsf{DFT}\|_F^2,
\end{equation}
where $\beta$ is a scaling factor for normalization. The optimization problem is
\begin{mini!}|s|[2]
{\mathbf{V}_\mathsf{MiLAC}}{\mathcal{L}=\|\beta \mathbf{V}_\mathsf{MiLAC}-\mathbf{F}_\mathsf{DFT}\|_F^2}
{\label{eq:opt_2dft}}{\mathcal{P}12:}
\addConstraint{\mathbf{V}_\mathsf{MiLAC} = \frac{1}{2}[\boldsymbol{\Theta}]_{N+1: 2N, 1: N},}
\addConstraint{\boldsymbol{\Theta}=\boldsymbol{\Theta}^\top,}
\addConstraint{\boldsymbol{\Theta}^H \boldsymbol{\Theta}=\mathbf{I}.}
\end{mini!}

Since $\mathbf{F}_\mathsf{DFT}$ is unitary, \ie $\mathbf{F}_\mathsf{DFT}^H \mathbf{F}_\mathsf{DFT} = \mathbf{I}$, the global optimum is obtained at $\mathbf{V}_\mathsf{MiLAC} = \frac{1}{2} \mathbf{F}_\mathsf{DFT}$ and $\beta = 2$ with zero approximation error. Substituting this into \eqref{eq:transfer_matrix}, the corresponding scattering matrix is
\begin{equation}
\label{eq:fim_all}
\boldsymbol{\Theta}=\left[\begin{array}{ll}
\mathbf{0} & \mathbf{F}_\mathsf{DFT}^\top \\
\mathbf{F}_\mathsf{DFT} & \mathbf{0}
\end{array}\right].
\end{equation}
This closed-form solution shows that the receiver-side MiLAC can implement the 2D DFT exactly, without any offline iterative optimization.

\subsection{Fractional 2D DFT for DoA Estimation}
\label{subsec:fracdft}
To improve the DoA estimation performance, we can leverage the snapshots across time slots to implement a fractional 2D DFT, which provides finer spatial-frequency resolution than the standard 2D DFT. Specifically, this approach generates a set of orthogonal spatial-frequency bins by reconfiguring the receive-side MiLAC at each time slot $\ell$ to implement the fractional 2D DFT matrix as in \eqref{eq:2ddft_L}. The $L$ snapshots are divided into $L_y$ blocks, each of length $L_x$, such that $L = L_x L_y$. The normalized fractional 2D DFT matrix at time slot $\ell$ is given by \cite{richards2005fundamentals,an2024two}
\begin{equation}
f_{n, n', \ell}\!=\!\frac{1}{\sqrt{N}}e^{\!-j 2 \pi \frac{\left(n_{x}-1\right)}{N_x}\left( n'_x - 1 + \frac{\ell_x\!-\!1}{L_x} \right)} e^{\!-j 2 \pi \frac{\left(n_{y}-1\right)}{N_y}\left( n'_y \!-\! 1\! +\! \frac{\ell_y\!-\!1}{L_y} \right)}\!,
\label{eq:2ddft_L}
\end{equation}
where $n_x$ and $n_y$ are the physical antenna indices defined in \eqref{eq:nxny}. The block index $\ell_y$ and the time-slot index $\ell_x$ within that block are defined by
\begin{equation}
\label{eq:lxy}
\ell_y = \left\lceil \ell / L_x \right\rceil, \quad \ell_x = \ell - (\ell_y - 1) L_x.
\end{equation}
Subsequently, at each time slot $\ell$, the MiLAC is reconfigured to implement the fractional 2D DFT matrix by solving the optimization problem in \eqref{eq:opt_2dft} with $\mathbf{F}_\mathsf{DFT}$ replaced by $\mathbf{F}_{\mathsf{DFT}, \ell}$ with each element given by \eqref{eq:2ddft_L}. 
The MiLAC-processed signal $\mathbf{Z}$ in \eqref{eq:rx_milac_output} is then evaluated over the joint spatial-time domain to identify the $K$ dominant peaks corresponding to the $K$ targets.
For the $k^\mathrm{th}$ target, the peak pair $(\widehat{n}_k, \widehat{\ell}_k)$ is
\begin{equation}
    (\widehat{n}_k, \widehat{\ell}_k) = \arg\max_{\substack{n \in \{1,\dots,N\} \\ \ell \in \{1,\dots,L\} \\ (n,\ell) \notin \widehat{\mathcal{I}}_{k-1}}} \left| z_{n,\ell} \right|^2, \quad \text{for } k = 1, \dots, K,
\end{equation}
where $\widehat{\mathcal{I}}_{k-1} = \{(\widehat{n}_1, \widehat{\ell}_1), \dots, (\widehat{n}_{k-1}, \widehat{\ell}_{k-1})\}$ is the set of previously identified peak indices, with $\widehat{\mathcal{I}}_0 = \emptyset$.
The corresponding electrical angles for the $k^\mathrm{th}$ target are then obtained as
\begin{equation}
\begin{aligned}
\widehat{\psi}_{x,k} &=  \operatorname{mod}\left(2\left(\frac{\widehat{n}_{x,k}-1}{N_x} + \frac{\widehat{\ell}_{x,k}-1}{N_x L_x}\right)+1, 2\right) - 1, \\
\widehat{\psi}_{y,k} &=   \operatorname{mod}\left(2\left(\frac{\widehat{n}_{y,k}-1}{N_y} + \frac{\widehat{\ell}_{y,k}-1}{N_y L_y}\right)+1, 2\right) - 1,
\end{aligned}
\end{equation}
where $\widehat{n}_{x,k}$ and $\widehat{n}_{y,k}$ are obtained by substituting $\widehat{n}_k$ into \eqref{eq:nxny}, while $\widehat{\ell}_{x,k}$ and $\widehat{\ell}_{y,k}$ are obtained by substituting $\widehat{\ell}_k$ into \eqref{eq:lxy}. Finally, the azimuth and elevation estimates follow from inverting \eqref{eq:psi_xy} as
\begin{equation}
\widehat{\varphi}_k = \arctan\left(\frac{\widehat{\psi}_{y,k} d_x}{\widehat{\psi}_{x,k} d_y}\right)\!, \,
\widehat{\vartheta}_k = \arcsin\left(\! \frac{1}{\kappa} \sqrt{\frac{\widehat{\psi}_{x,k}^2}{d_x^2} + \frac{\widehat{\psi}_{y,k}^2}{d_y^2}}\right)\!.
\end{equation}

This MiLAC-aided fractional 2D DFT approach removes the RF chains and ADCs at the receiver due to the received RF signals are directly processed in the analog domain, which significantly reduces the hardware cost and power consumption. In contrast, the conventional radar system requires $N$ RF chains and ADCs to down-convert and digitize the received signal for DoA estimation.

\section{Numerical Evaluation}
\label{sec:simu}
In this section, we evaluate the performance of the proposed MiLAC-aided MIMO radar sensing system. The system employs a co-located UPA with half-wavelength element spacing, \ie $d_x = d_y = \lambda/2$. The UPA is assumed to be square with $N_x = N_y$ and hence $N = N_x \times N_y$. The noise power is normalized to $\sigma_r^2 = 1$, and the total receive radar SNR is defined as $\text{SNR} = 10 \log_{10}\!\left(\sum_{k=1}^{K}\alpha_{r,k}^2 P_T / \sigma_r^2\right)$, where $P_T$ is the total transmit power. To improve the DoA estimation performance, we consider $L$ snapshot observations at the receiver using a fractional DFT-based approach as explained in Section \ref{subsec:fracdft}.
The penalty scaling factor and convergence tolerance are set as $c = 0.5$ and $\epsilon = 10^{-3}$, respectively. Three baseline schemes are considered for comparison: on the transmit side, (i) the \emph{digital beamforming design} (Algorithm~\ref{alg:digital_beamforming}), which serves as a performance lower bound on the CRB, and (ii) the \emph{matched filter}, which steers the transmit beam toward the target directions using the steering vectors; on the receive side, (iii) the \emph{digital 2D DFT-based DoA estimator}, which performs the 2D DFT computation digitally.


\subsection{MiLAC-aided Transmit Beamforming Design}
\subsubsection{Beampattern Performances}
\fig{fig:all_beam} compares the 2D transmit beampatterns generated by three kinds of beamforming designs with $N = 64$ antennas with 20~dB SNR. For the single-target case ($K = 1$), \fig{fig:all_beam}(a)-1 shows the MiLAC design, \fig{fig:all_beam}(a)-2 shows the fully-digital design, and \fig{fig:all_beam}(a)-3 shows the matched filter design. In addition, we also plot the beampattern when two targets are present from \fig{fig:all_beam}(b)-1 to (b)-3. It can be observed that: (i) the MiLAC-aided beamforming design can probe to the correct target directions. In comparison to the matched filter, the MiLAC design can generate a sharper main lobe and lower side lobes, which is beneficial for improving the estimation performance. (ii) The beampattern of the MiLAC design has the same performance to the fully-digital design. This demonstrates that constraints imposed by MiLAC do not degrade the beamforming capability relative to the fully-digital design.

\begin{figure}[t]
    \centering
    \includegraphics[width = 0.5\textwidth]{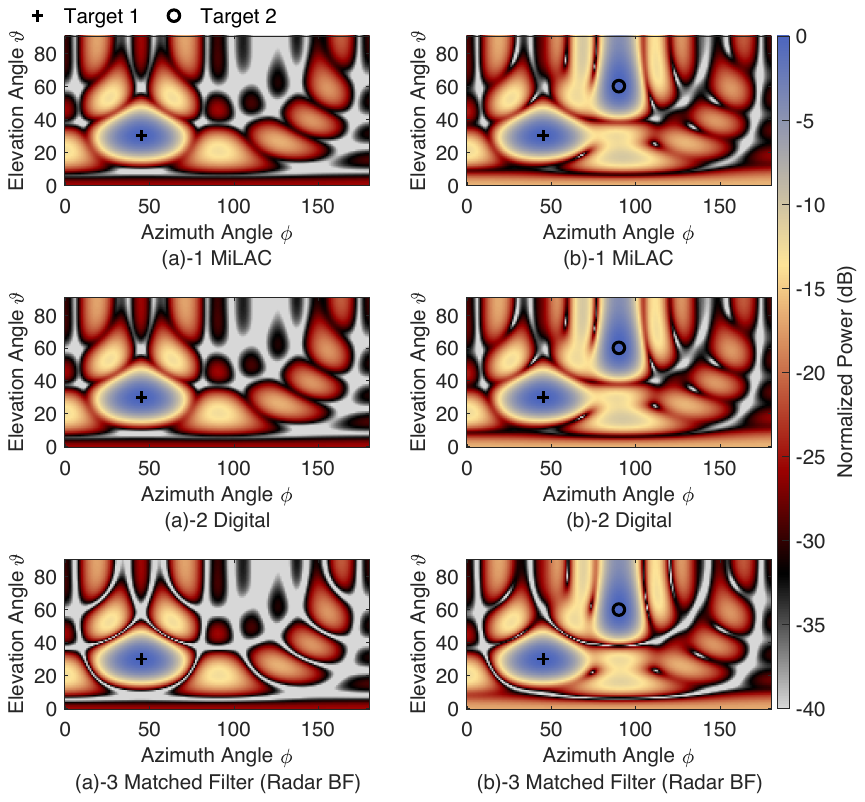}
    \centering
    \caption{Transmit beampattern with $N = 64$ transmit antennas with receive radar SNR = $20$ dB. The optimized beampatterns with 1 target are shown in (a)-1 MiLAC , (a)-2 digital and (a)-3 matched filter beamforming design. The optimized beampatterns with 2 targets are shown in (b)-1 MiLAC , (b)-2 digital and (b)-3 matched filter beamforming design.}
    \label{fig:all_beam}
\end{figure}



\subsubsection{CRB Performances}
To quantitatively evaluate the estimation performance, we compute the CRB using the FIM derived in \eqref{eq:fim_mat} with the optimal beamformers obtained from Algorithm~\ref{alg:milac_beamforming} and Algorithm~\ref{alg:digital_beamforming}, respectively, and compare the proposed MiLAC-aided design against the fully-digital benchmark.

We first evaluate the CRB versus the receive radar SNR with $K = 3$ targets and $N = 16$ antennas. The targets are located at $(\varphi, \vartheta) = (30^\circ, 30^\circ), (60^\circ, 45^\circ)$, and $(90^\circ, 60^\circ)$. As shown in \fig{fig:crb_snr_3target}, the CRB decreases with increasing SNR for both schemes, and the MiLAC design achieves CRB performance the same as the fully-digital design across the entire SNR range. 


\fig{fig:crb_target} plots the CRB versus the number of targets $K$ with $N  = 64$ antennas and a fixed SNR of 20~dB. As the number of targets increases, the CRB increases for both designs because more unknown parameters need to be estimated and the transmit power per target is reduced. The MiLAC design consistently achieves CRB performance the same to the fully-digital benchmark. 

Across all evaluations, the proposed MiLAC-aided design obtains same CRB performances as the fully-digital benchmark regardless of the SNR, and number of targets. This verifies the theoretical analysis that the MiLAC-aided beamforming design can achieve the same CRB as the fully-digital design in Section~\ref{subsec:crb_equiv}.

\begin{figure}[t]
    \centering
    \includegraphics[width = 0.3\textwidth]{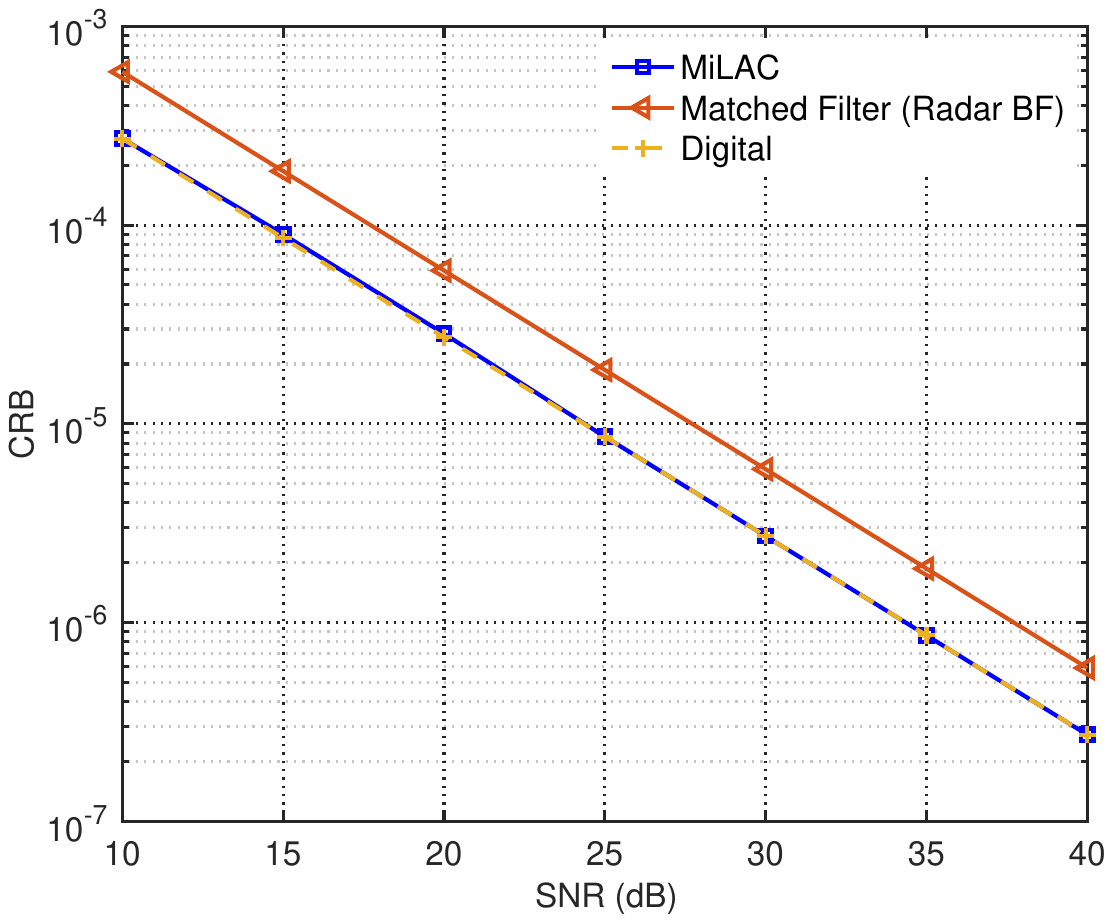}
    \centering
    \caption{CRB of 3 target versus receive radar SNR with number of antennas $N = 16$. The targets are located at $(\varphi, \vartheta) = (30^\circ, 30^\circ), (60^\circ, 45^\circ)$, and $(90^\circ, 60^\circ)$.}
    \label{fig:crb_snr_3target}
\end{figure}


\begin{figure}[t]
    \centering
    \includegraphics[width = 0.3\textwidth]{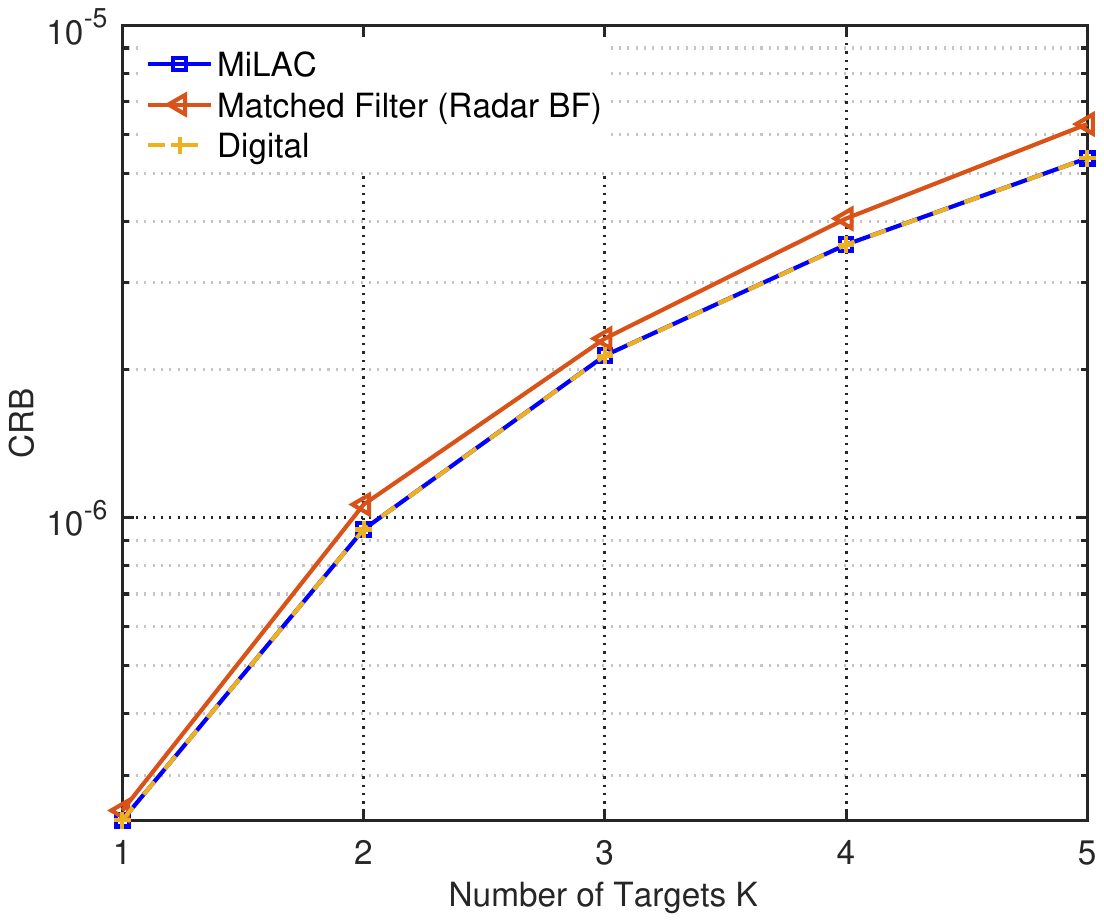}
    \centering
    \caption{CRB versus number of targets with number of antennas $N = 64$ and receive radar SNR $20$ dB.}
    \label{fig:crb_target}
\end{figure}




\subsection{MiLAC-aided Receiver Side DoA Estimator}

\subsubsection{Computational Complexity}
The complexity of 1D DFT in the column and row are $34/9 N_y \log_2 (N_y)$ and $34/9 N_x \log_2 (N_x)$, respectively \cite{johnson2006modified}, thus the total complexity of 2D DFT is $34/9 N \log_2 (N)$. Here, $N = N_x N_y$, and we assume $N_x = N_y$ for simplicity so that the UPA is square. Since we consider the fractional 2D DFT with $L$ snapshots, the total complexity of the digital 2D DFT is $34/9 L N \log_2 (N)$. In contrast, the MiLAC-aided 2D DFT is generated instantly in the analog domain requiring no digital operations \cite{nerini2025analog2}.Therefore, the computational complexity of the MiLAC-aided 2D DFT is significantly lower than the digital 2D DFT. \fig{fig:2ddft_complexity} illustrates the computational complexity of both approaches versus the number of antennas $N$ with $L = 2500$. The MiLAC-aided approach maintains a constant complexity regardless of $N$. The benefit of lower complexity of MiLAC becomes more significant as the number of antennas and snapshots increases, this is particularly advantageous for massive MIMO radar systems.

\begin{figure}[t]
    \centering
    \includegraphics[width = 0.3\textwidth]{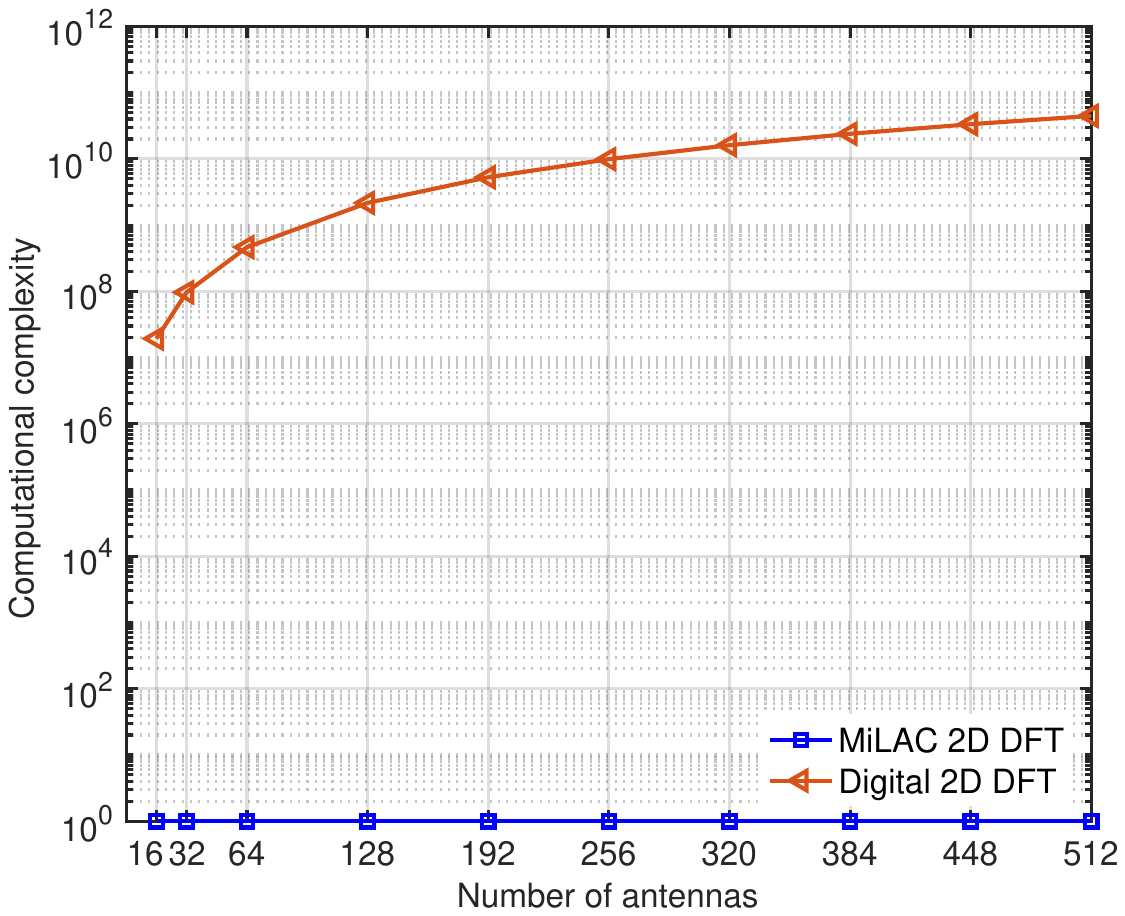}
    \centering
    \caption{Computational complexity of digital and MiLAC 2D DFT with $L = 2500$.}
    \label{fig:2ddft_complexity}
\end{figure}

\subsubsection{Detection results}
In this part, we evaluate the DoA estimation performance of the MiLAC-aided receiver, which performs the 2D DFT in the analog domain via the MiLAC \eqref{eq:fim_all}. \fig{fig:2ddft_1target} (a) shows the angular spectrum obtained from the MiLAC-aided receiver for a single target located at $(\vartheta, \varphi) = (45^\circ, 30^\circ)$ with $N = 4 \times 4 = 16$ antennas and $L = 2500$ snapshots with 20~dB SNR. The spectrum shows a peak at the true target location. Compared to \fig{fig:2ddft_1target} (b) which is the fully-digital approach, this confirms that the MiLAC successfully implements the 2D DFT and enables accurate DoA detection without digital processing.

We extend the detection to the multi-target case with targets at $(\vartheta, \varphi) = (45^\circ, 30^\circ)$ and $(90^\circ, 60^\circ)$, using $N = 64$ antennas and $L = 400$ snapshots. Both targets are clearly detected in the angular spectrum. This demonstrates the ability of the MiLAC-aided receiver to detect multiple targets. In addition,the larger antenna array has finer angular resolution.

\begin{figure}[t]
    \centering
    \includegraphics[width = 0.49\textwidth]{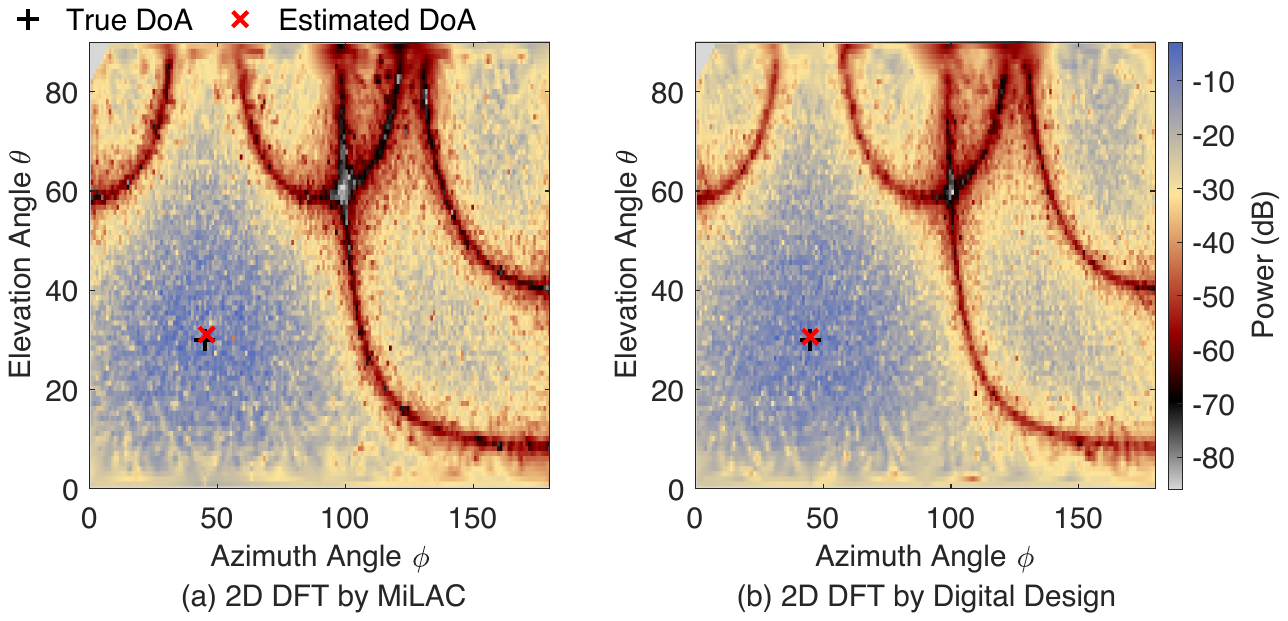}
    \centering
    \caption{Angular spectrum of the received signal with 1 target. The physical angles of the target are $(\vartheta, \varphi) = (45^\circ, 30^\circ)$. Parameters: SNR = $20$ dB, $N = 16$ antennas, and $L = 2500$ snapshots.}
    \label{fig:2ddft_1target}
\end{figure}

\begin{figure}[t]
    \centering
    \includegraphics[width = 0.49\textwidth]{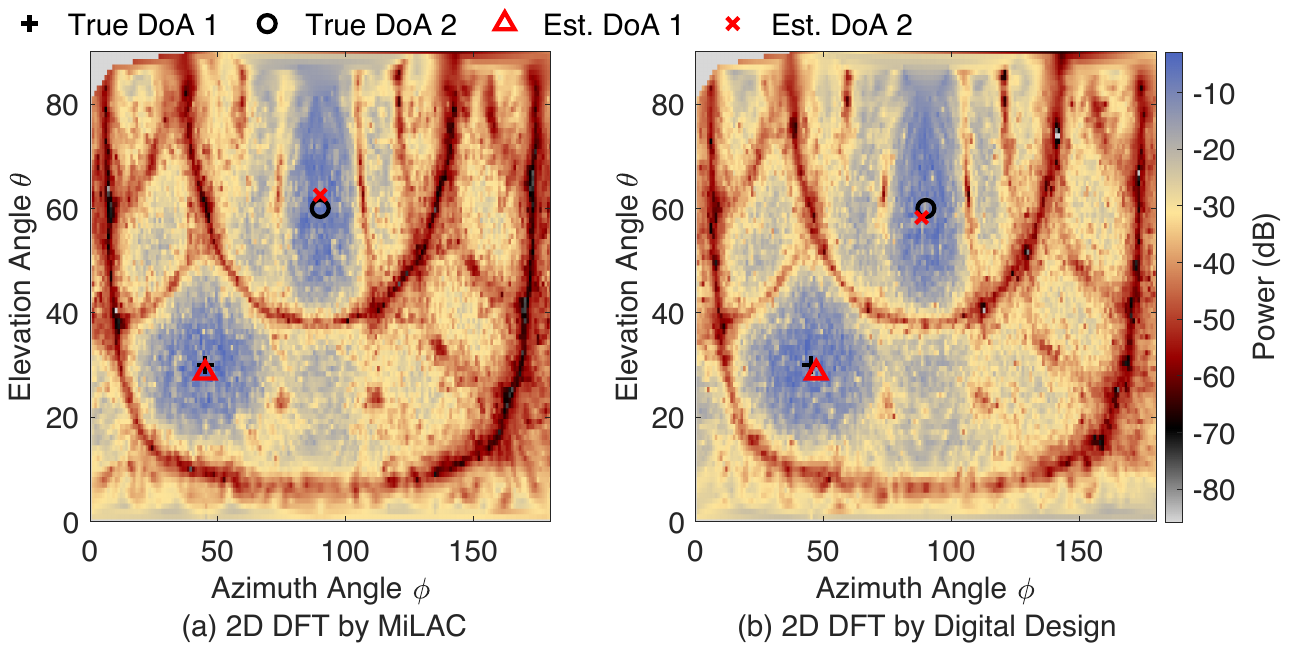}
    \centering
    \caption{Angular spectrum of the received signal with 2 targets. The physical angles are $(\vartheta, \varphi) = (45^\circ, 30^\circ)$ and $(90^\circ, 60^\circ)$, respectively. Parameters: SNR = $20$ dB, $N = 64$ antennas, and $L = 400$ snapshots.}
    \label{fig:2ddft_2target}
\end{figure}

\subsubsection{MSE Performances of Parameter Estimation}
Finally, we compare the mean squared error (MSE) of the DoA estimation between the MiLAC-aided approach and the fully-digital 2D DFT-based method. The ``MiLAC'' label refers to utilizing the MiLAC-aided transmit beamforming (Algorithm~\ref{alg:milac_beamforming}) combined with MiLAC-aided analog 2D DFT reception. ``Digital'' refers to the digital transmit beamforming design (Algorithm~\ref{alg:digital_beamforming}) combined with digital 2D DFT processing.

We take the estimation of the azimuth angle $\varphi$ as examples. In \fig{fig:MSE_ant16}, it shows the MSE versus SNR of the estimated azimuth angle $\varphi$ for a single target, where $N = 16$ antennas and $T = 256$ snapshots. The MiLAC-aided estimator achieves MSE performance comparable to the fully-digital approach across all SNRs. In addition, the MSE is lower bounded by the maximum likelihood (ML) estimator. Specifically, the bound is obtained by adopting the ML estimator with both MiLAC-aided and digital transmit beamforming as shown in \fig{fig:MSE_ant16}. 
We subsequently increase the number of antennas to $N = 64$ and the number of snapshots to $T = 4096$. \fig{fig:MSE_ant64} illustrates that the MiLAC-aided method has the same performance as the fully-digital method. Additionally, due to the larger array aperture and more snapshots, the MSE is reduced compared to the $N = 16$ case. The 2D DFT-based DoA estimator cannot reach the bound of the ML estimator is due to that the DFT-based estimator is a sub-optimal estimator, and the estimation error is affected by the grid mismatch between the true target angles and the DFT bins. The MSE can be reduced by increasing the number of antennas and snapshots as shown in \fig{fig:MSE_ant16} and \fig{fig:MSE_ant64} or adopting off-grid estimation methods \cite{yang2012off}. 
These results confirm that the MiLAC-aided approach achieves the same estimation accuracy compared to the fully-digital system with significantly reduced hardware cost and computational complexity. 

\begin{figure}[t]
    \centering
    \includegraphics[width = 0.3\textwidth]{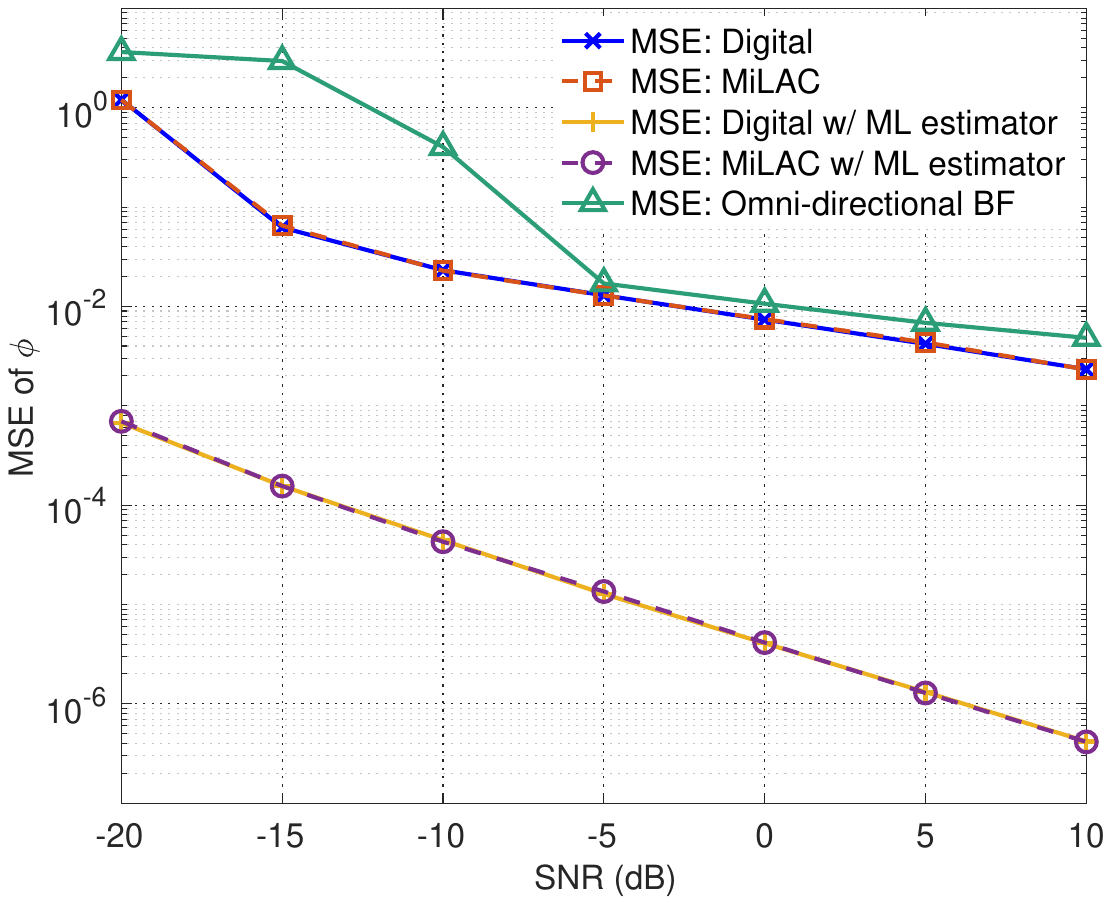}
    \centering
    \caption{MSE comparison of single-target DoA estimation versus $\theta$ for the MiLAC-aided and conventional 2D digital DFT-based methods, with $N = 16$ antennas and $T = 256$ snapshots. ``MiLAC" denotes MiLAC-aided transmit beamforming and DoA estimation, while ``Digital" denotes digital transmit beamforming and digital 2D DFT.}
    \label{fig:MSE_ant16}
\end{figure}

\begin{figure}[t]
    \centering
    \includegraphics[width = 0.3\textwidth]{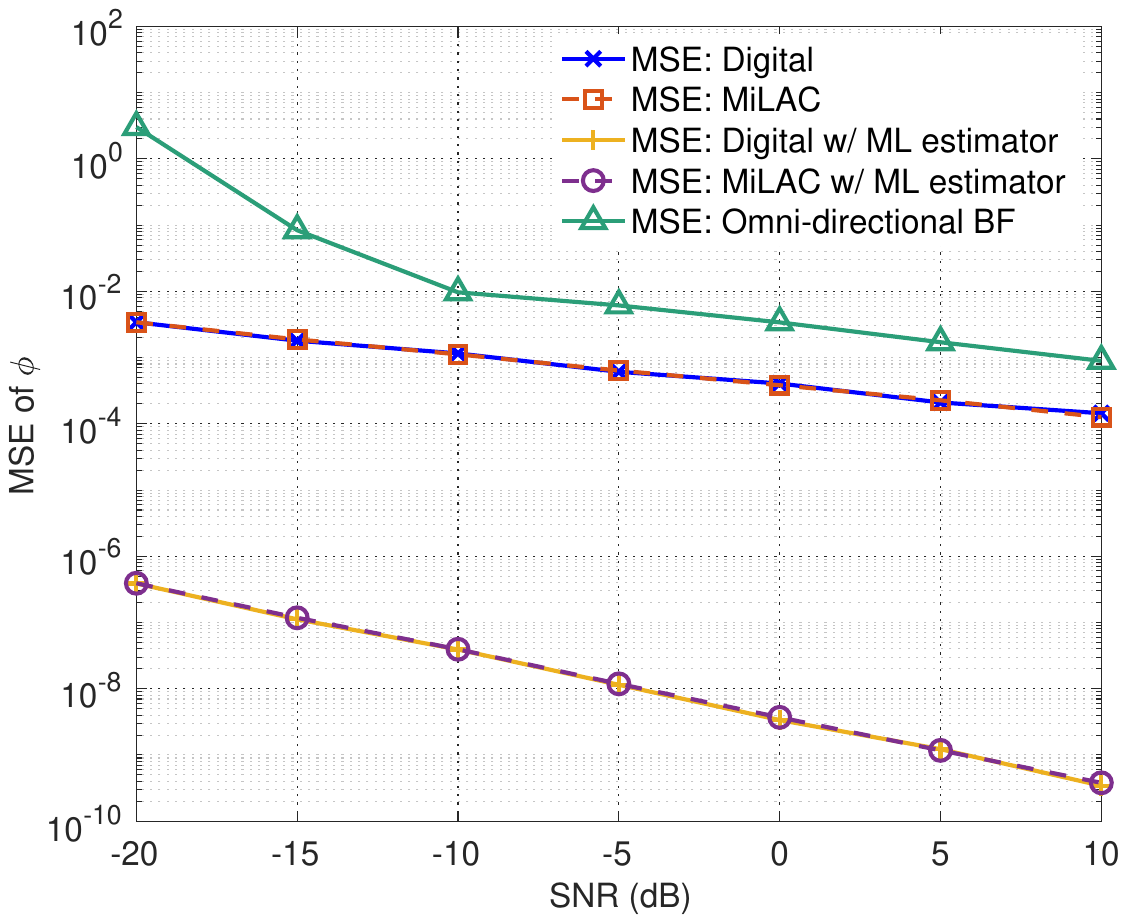}
    \centering
    \caption{MSE comparison of single-target DoA estimation versus $\theta$ for the MiLAC-aided and conventional 2D digital DFT-based methods, with $N = 64$ antennas and $T = 4096$ snapshots.}
    \label{fig:MSE_ant64}
\end{figure}




\section{Conclusion}
\label{sec:con}
In this paper, we have proposed a MiLAC-aided MIMO radar sensing approach that implements both transmit beamforming and receiver-side 2D-DFT-based DoA estimation. On the transmit side, we have formulated the beamforming design as a weighted CRB minimization problem subject to the total transmit power constraint and the lossless and reciprocal constraints imposed by the MiLAC, and developed a PDD-based iterative algorithm to solve the resulting non-convex problem. We have further proved that MiLAC-aided and fully-digital beamforming achieve the same optimal CRB, because the CRB depends on the beamforming matrix only through the transmit covariance and the set of achievable transmit covariance matrices is identical under both architectures. On the receiver side, we have shown that the 2D DFT can be exactly implemented by a lossless reciprocal MiLAC in closed form, enabling analog-domain DoA estimation. Numerical results have confirmed the theoretical finding and demonstrated that MiLAC-aided DoA estimation achieves MSE performance the same as the fully-digital benchmark, while significantly reducing hardware cost and computational complexity. {These results confirm that MiLAC is a promising hardware architecture for MIMO radar sensing, which reduces hardware cost and power consumption due to employing low-resolution DACs at the transmitter and eliminating RF chains and ADCs at the receiver, as well as eliminates all digital DFT operations for DoA estimation.
}
Future work will explore designing MiLAC to support high-resolution DoA estimation algorithms such as MUSIC.

\section*{Appendix A: Derivation of The Fisher Information Matrices} 
\label{appendix_derivative}
In this appendix, we derive the angle-dependent derivatives of the round-trip steering matrix used in the FIM construction. Recall that $\mathbf{A}(\psi_x, \psi_y) = \mathbf{b}(\psi_x, \psi_y) \mathbf{a}^\top(\psi_x, \psi_y) \in \mathbb{C}^{N \times N}$. Since $\mathbf{A}$ depends on the electrical angles $\psi_x$ and $\psi_y$, and these electrical angles are in turn functions of the physical angles $\vartheta$ and $\varphi$, the chain rule gives
\begin{equation}
    \frac{\partial \mathbf{A}}{\partial \vartheta} = \frac{\partial \mathbf{A}}{\partial \psi_x} \frac{\partial \psi_x}{\partial \vartheta} + \frac{\partial \mathbf{A}}{\partial \psi_y} \frac{\partial \psi_y}{\partial \vartheta},
\end{equation}
\begin{equation}
    \frac{\partial \mathbf{A}}{\partial \varphi} = \frac{\partial \mathbf{A}}{\partial \psi_x} \frac{\partial \psi_x}{\partial \varphi} + \frac{\partial \mathbf{A}}{\partial \psi_y} \frac{\partial \psi_y}{\partial \varphi}.
\end{equation}
We therefore first derive $\frac{\partial \mathbf{A}}{\partial \psi_x}$ and $\frac{\partial \mathbf{A}}{\partial \psi_y}$. Using the product rule, we obtain
\begin{equation}
    \dot{\mathbf{A}}_{\psi_x}\left(\psi_x, \psi_y\right) = \frac{\partial \mathbf{A}}{\partial \psi_x} = \frac{\partial \mathbf{b}}{\partial \psi_x} \mathbf{a}^\top + \mathbf{b} \frac{\partial \mathbf{a}^\top}{\partial \psi_x}.
\end{equation}
To simplify the derivation, define $\mathbf{D}_x = \text{diag}(0, 1, 2, \dots, N_x - 1)$ and $\mathbf{D}_y = \text{diag}(0, 1, 2, \dots, N_y - 1)$. The derivatives of the one-dimensional steering vectors are then given by
\begin{equation}
    \frac{\partial \mathbf{a}_x(\psi_x)}{\partial \psi_x} = \jmath \mathbf{D}_x \mathbf{a}_x(\psi_x).
\end{equation}
\begin{equation}
    \frac{\partial \mathbf{a}_y(\psi_y)}{\partial \psi_y} = \jmath \mathbf{D}_y \mathbf{a}_y(\psi_y)
\end{equation}
Substituting these expressions into the Kronecker-product steering-vector model yields
\begin{equation}
\begin{aligned}
    \frac{\partial \mathbf{a}}{\partial \psi_x} & = \mathbf{a}_y(\psi_y) \otimes \left( \frac{\partial \mathbf{a}_x(\psi_x)}{\partial \psi_x} \right) = \mathbf{a}_y(\psi_y) \otimes (\jmath \mathbf{D}_x \mathbf{a}_x(\psi_x))\\
    & = \jmath (\mathbf{I} \otimes \mathbf{D}_x)\mathbf{a}(\psi_x, \psi_y)
\end{aligned}
\end{equation}
\begin{equation}
\begin{aligned}
    \frac{\partial \mathbf{a}}{\partial \psi_y} &= \left( \frac{\partial \mathbf{a}_y(\psi_y)}{\partial \psi_y} \right) \otimes \mathbf{a}_x(\psi_x) = (\jmath \mathbf{D}_y \mathbf{a}_y(\psi_y)) \otimes \mathbf{a}_x(\psi_x)\\
    & = \jmath (\mathbf{D}_y \otimes \mathbf{I})\mathbf{a}(\psi_x, \psi_y)
\end{aligned}
\end{equation}
For notational convenience, define $\mathbf{D}_{\psi_x} = \jmath (\mathbf{I} \otimes \mathbf{D}_x)$ and $\mathbf{D}_{\psi_y} = \jmath (\mathbf{D}_y \otimes \mathbf{I})$. Then
\begin{equation}
    \frac{\partial \mathbf{A}}{\partial \psi_x} = (\mathbf{D}_{\psi_x} \mathbf{b}) \mathbf{a}^\top + \mathbf{b} (\mathbf{D}_{\psi_x} \mathbf{a})^\top.
\end{equation}
Since $\mathbf{D}_{\psi_x}^\top = \mathbf{D}_{\psi_x}$, this expression simplifies to
\begin{equation}
    \frac{\partial \mathbf{A}}{\partial \psi_x} = \mathbf{D}_{\psi_x} \mathbf{A} + \mathbf{A} \mathbf{D}_{\psi_x}.
\end{equation}
Applying the same argument to the $y$-direction gives
\begin{equation}
    \frac{\partial \mathbf{A}}{\partial \psi_y} = (\mathbf{D}_{\psi_y} \mathbf{b}) \mathbf{a}^\top + \mathbf{b} (\mathbf{D}_{\psi_y} \mathbf{a})^\top = \mathbf{D}_{\psi_y} \mathbf{A} + \mathbf{A} \mathbf{D}_{\psi_y}.
\end{equation}

We next relate the electrical-angle derivatives to the physical-angle derivatives. From \eqref{eq:psi_xy}, we have
\begin{equation}
    \frac{\partial \psi_x}{\partial \vartheta} = \kappa d_x \cos(\vartheta) \cos(\varphi), \quad \frac{\partial \psi_y}{\partial \vartheta} = \kappa d_y \cos(\vartheta) \sin(\varphi),
\end{equation}
\begin{equation}
    \frac{\partial \psi_x}{\partial \varphi} = -\kappa d_x \sin(\vartheta) \sin(\varphi), \quad \frac{\partial \psi_y}{\partial \varphi} = \kappa d_y \sin(\vartheta) \cos(\varphi).
\end{equation}
Substituting these derivatives into the chain-rule expressions at the beginning of the appendix yields
\begin{equation}
    \frac{\partial \mathbf{A}}{\partial \vartheta} = \left[ \kappa d_x \cos(\vartheta) \cos(\varphi) \right] \frac{\partial \mathbf{A}}{\partial \psi_x} + \left[ \kappa d_y \cos(\vartheta) \sin(\varphi) \right] \frac{\partial \mathbf{A}}{\partial \psi_y},
\end{equation}
\begin{equation}
    \frac{\partial \mathbf{A}}{\partial \varphi} = \left[ -\kappa d_x \sin(\vartheta) \sin(\varphi) \right] \frac{\partial \mathbf{A}}{\partial \psi_x} + \left[ \kappa d_y \sin(\vartheta) \cos(\varphi) \right] \frac{\partial \mathbf{A}}{\partial \psi_y}.
\end{equation}

\bibliographystyle{IEEEtran_url}
\bibliography{IEEEabrv,references}

\end{document}